\documentclass[reqno,11pt]{amsart}

\usepackage[utf8x]{inputenc}
\usepackage{amsmath, amsfonts, amssymb, amsthm, mathtools}
\usepackage{dsfont}
\usepackage{mathrsfs}
\usepackage{bm}

\usepackage{graphicx}
\DeclareGraphicsExtensions{.pdf,.png,.jpg}
\usepackage{rotating}
\usepackage{wasysym}
\usepackage{subfigure}

\usepackage{geometry}
\usepackage{setspace}
\usepackage{enumitem}
\setlist[enumerate]{leftmargin=2em}   

\usepackage[dvipsnames]{xcolor}
\usepackage[round,comma]{natbib}
\usepackage[colorlinks,urlcolor=red,citecolor=blue,linkcolor=teal]{hyperref}

\usepackage{comment}
\usepackage{nicefrac}
\usepackage{booktabs}
\usepackage{bbold}   
\usepackage{comment}

\newtheorem{theorem}{Theorem}
\newtheorem{corollary}[theorem]{Corollary}
\newtheorem{lemma}[theorem]{Lemma}

\theoremstyle{definition}
\newtheorem{example}[theorem]{Example}

\newtheorem{assumption}[theorem]{Assumption}

\numberwithin{equation}{section}

\DeclareMathOperator{\es}{\mathrm{ES}}

\newcommand{\R}{\mathbb{R}}

\newcommand{\E}{\mathbb{E}}
\renewcommand{\P}{\mathbb{P}}
\newcommand{\N}{\mathbb{N}}

\newcommand{\tn}{\textnormal}
\newcommand{\ind}{\mathbf{1}}
\newcommand{\Norm}{\|\cdot\|}

\newcommand{\dom}{\textnormal{dom}}

\newcommand{\sgn}{\textnormal{sgn}}
\newcommand{\ssd}{\leq_{\textnormal{icx}}}
\newcommand{\cx}{\le_{\tn{cx}}}

\newcommand{\CA}{\mathcal A}

\newcommand{\CF}{\mathcal F}
\newcommand{\CG}{\mathcal G}

\newcommand{\CL}{\mathcal L}

\newcommand{\CX}{\mathcal X}

\newcommand{\peq}{\preceq}

\newcommand{\mbf}{\mathbf}

\newcommand{\ph}{\varphi}

\begin{document}

\title{A unifying perspective on the collapse to the mean for law-invariant functionals}

    \author{Felix-Benedikt Liebrich}
		\address{Amsterdam School of Economics, University of Amsterdam, Netherlands.}
		\email{f.b.liebrich@uva.nl}
\date{July 24, 2026}

\begin{abstract}
We revisit the ``collapse to the mean'' phenomenon, which refers to mild structural conditions, such as local linearity, that force a law-invariant functional $\ph$ defined on finite-mean random variables to depend solely on the expectation of its argument $X$, and not on any other distributional feature.
Starting from a concise characterisation of the convex order, our simplified approach unifies and extends existing results without assuming the functional to be convex or monotone in the almost-sure order, and clarifies the conceptual foundations of the ``collapse to the mean" phenomenon.  
In addition, we establish a new ``dual collapse'' result for quasi-star-shaped functionals.

\smallskip

\noindent {\em JEL Classification:} C60 $\cdotp$ D81

\noindent {\em Mathematics Subject Classification (MSC2020):} 60E15
$\cdotp$ 91B05 $\cdotp$ 91B06

\noindent{\em Keywords:} Law invariance $\cdotp$ collapse to the mean $\cdotp$ convex order
\end{abstract}

\maketitle

\section{Introduction}

\thispagestyle{empty}

Defined on a domain $\CX$ of random variables $X$ over a fixed probability space, law-invariant functionals assign values solely on the basis of the distribution of the argument $X$ in question.
Such functionals are widely used in economics, finance, and risk management to assess the utility, risk, or deviation from a benchmark of a random prospect.

The simplest examples of law-invariant functionals are {\em expectation-based} ones, where the assigned value is entirely determined by the mean of the argument (assuming such a mean exists). 
Consequently, any two random variables with the same mean under the latent reference probability measure are mapped to the same value, irrespective of other distributional features in which they may differ, such as higher-order moments, symmetry, or tail behaviour.
By contrast, these features are often crucial for functionals assessing the riskiness of random prospects. 
They directly impact the computation of classical risk metrics such as Value-at-Risk, Expected Shortfall and upper partial moments \cite[Chapter 2.3]{Embrechts}. 
Insurance premia commonly add a deviation-measuring safety loading such as a fraction of variance or standard deviation to the actuarially fair price of a claim \cite[Chapter 5.3]{Kaas}. 
Moreover, since \cite{Markowitz}, it has been standard to select portfolios trading off the mean of random returns against their dispersion around the mean. Expectation-based functionals are therefore generally unsuited for risk management purposes.

In many contexts, however, such functionals are not pinned down in closed form directly but characterised by {\em axioms}, whose interplay and relation to other properties might not be transparent at first sight.
In this situation, the term ``collapse to the mean'' summarises at times surprising and seemingly unrelated conditions under which a law-invariant functional is necessarily---and potentially unintentionally---expectation-based.  
Following the discussion in \cite{Bellini}, this is precisely the context of  
\cite{Castagnoli} and \cite{Frittelli}, the earliest contributions on the collapse to the mean.
These challenge \cite{Wang2000,Wang2002}'s harmonisation of financial and insurance pricing, which builds on \cite{Wangetal}'s axiomatisation of insurance pricing rules as law-invariant Choquet integrals. 
They effectively question whether a pricing functional can sensibly be law invariant
by exposing the fundamental tension between the law invariance of a sublinear or convex pricing rule and the presence of nontrivial frictionlessly priced payoffs.
The only admissible pricing rule able to cope with both requirements is the expectation, the actuarially fair price, yet it does not contain any safety loading.

In other words, counter to the modeller's intuition, a functional derived from seemingly reasonable axioms may discard all distributional information beyond the mean of its argument.
The repercussions of this phenomenon are not limited to insurance pricing; for example, \cite[Section 5]{Bellini} interpret this result as the underpinning of the partial law invariance encountered in market-consistent valuation and apply it to capital requirements based on eligible assets in the spirit of \cite{Artzner}.
In \cite{Liebrich1} and \cite{Liebrich2}, collapse-to-the-mean results (implicitly) serve as important tools in solving optimal risk sharing problems involving eligible assets or consistent risk measures. 
A further example is \cite{Centrone} in the context of capital allocation.

A growing body of literature has subsequently shown how robust the collapse-to-the-mean phenomenon is, and has streamlined and generalised the associated mathematical results.
While \cite{Bipolar} casts them in terms of the bipolar theorem, \cite{Bellini,Chen,Liebrich} and \cite{Amarante} have extended the analysis to more general spaces and classes of functionals; relaxed countable additivity of the reference measure; or replaced local linearity with conditions requiring, e.g., that risk is not increased when suitable prospects are added to the initial position.
These extensions broaden the economic implications of the phenomenon and reveal further limits to meaningful risk management wherever law invariance is imposed.

Beyond these contributions dedicated specifically to the collapse to the mean, the debate connects to further strands of the risk management literature.
A collapse to the mean can be interpreted as a failure of the ``sensitivity to large losses" property of  \cite{herdegen2024}. The conditions forcing expectation-basedness can be viewed through the lens of risk reducers, as studied for the convex order by \cite{Cheung,He} and for the Expected Shortfall by \cite{herdegen2025}. 
Moreover, expectation-basedness coincides with the ``risk neutrality" of \cite{Maccheroni}, whose global properties of dependence neutrality and neutrality to full insurance are equivalent to the local conditions appearing in collapse-to-the-mean results.

A study closely related to the present paper  is \cite{Liebrich}. 
That work treats the collapse to the mean in as general a framework as possible and avoids assumptions of convexity or monotonicity in the almost-sure order where possible. 
However, it does not provide a unified mathematical exposition with a single main result from which the individual case studies follow as corollaries. 
As a consequence, its findings may appear fragmented or technical.

In the present paper, we address this limitation and contribute to the literature as follows. 
First, we unify the bulk of existing results without imposing convexity or monotonicity assumptions.
Second, we derive the collapse-to-the-mean results not from law invariance per se, but from consistency with the convex order. Its concise  characterisation in \cite{Shaked} serves as a simple and transparent mathematical foundation, allowing to avoid dual methods and streamline proofs. 
Third, this simplified perspective naturally leads to a generalisation of the existing results that covers the quasi-star-shaped functionals recently introduced by \cite{Hanetal} and even potentially incomplete law-invariant preferences.
Fourth, we draw explicit connections between the collapse to the mean 
and risk reducers, sensitivity to large losses, and insurance propensity, thereby involving a body of literature outside of the scope of preceding contributions and improving the relevance for risk management in actuarial and financial contexts.

The paper is organised as follows. Section~\ref{sec:preliminaries} recalls the preliminaries, reviews the relevant existing results unified by the present approach, and discusses the connection between the collapse to the mean and recent contributions in the risk management literature. Section~\ref{sec:main} presents and thoroughly discusses our main unifying results. The mathematical analysis and proofs of the main results are provided in Section~\ref{sec:proofs}.

\section{Preliminaries and background}\label{sec:preliminaries}

\subsection{Preliminaries}\label{sec:setting}

Throughout the paper, $(\Omega, \mathcal{F}, \mathbb{P})$ denotes an underlying atomless probability space and $L^\infty$ ($L^1$, respectively) the space of equivalence classes of bounded (finite-mean) random variables.
These spaces carry the usual almost-sure (a.s.) order to compare suitable random variables. 
Moreover, a random variable $X$ is {\em nonconstant} if $\P(X=c)<1$ for all constants $c\in\R$. 

The convex order on $L^1$---denoted by $\cx$---asserts that $Y$ dominates $X$ ($X \cx Y$) if
\begin{equation}\label{def conv}
\E[u(X)] \leq \E[u(Y)] \quad \text{for all convex functions }u \colon \mathbb{R} \to \mathbb{R}.
\end{equation}
For mathematical and conceptual challenges with the convex order beyond $L^1$, we refer to \cite{Cote}.

In this paper, we deal with functionals on a given set $\CX\subseteq L^1$ that take values in the extended real line, and with preference relations (preorders) $\peq$ on $\CX$, i.e., transitive and reflexive but not necessarily complete binary relations.
A preference relation $\peq$ with symmetric part $\sim$ is said to be:
\begin{enumerate}[label=(\alph*)]
\item  {\em law invariant} if two random variables $X,Y\in\CX$ with the same distribution under $\P$ satisfy $X\sim Y$. 
\item \emph{$\cx$-consistent} (or {\em Schur convex}) if $X\succeq Y$ whenever $X,Y\in\CX$ satisfy $X \cx Y$.
\item  
{\em expectation-based} if $X\sim Y$ whenever  $X,Y\in\CX$ have the same mean under $\P$, i.e., they satisfy $\E[X]=\E[Y]$.
\end{enumerate}
Clearly, the implications (c)$\implies$(b)$\implies$(a) hold.

Every functional $\ph\colon\CX\to[-\infty,\infty]$ induces a preference relation $\peq_\ph$ on $\CX$ by setting 
\begin{equation}\label{eq:peqph}
    X\peq_\ph Y\quad\iff\quad \ph(X)\ge\ph(Y).
\end{equation}
If $\ph$ measures the risk of prospects in $\CX$, $\peq_\ph$ encodes the preference for less risk. 
Via \eqref{eq:peqph}, it is often easy to translate properties of functional $\ph$ to the preference relation $\peq_\ph$ and vice versa, and we shall do so regularly. For example, a functional $\ph$ is $\cx$-consistent if $X\cx Y$ implies $\ph(X)\le \ph(Y)$.

Regarding the domain of definition of the functionals studied in this paper, we impose \cite[Assumption 2.1]{Liebrich} and fix a pair $(\CX,\CX^*)$ of subspaces of $L^1$ with the following properties. First, both $\CX$ and $\CX^*$ contain $L^\infty$ as subspace. Second, both spaces are law-invariant, i.e., whether $X\in L^1$ belongs to one of the spaces or not depends only  on its distribution under $\P$.
Third, they satisfy
\[\CX\cdot\CX^*:=\{XY\mid X\in\CX,\,Y\in\CX^*\}\subseteq L^1.\]
This permits the definition of the locally convex $\sigma(\CX,\CX^*)$-topology, the weakest linear topology on $\CX$ making each functional $X\mapsto\E[XY]$, $Y\in\CX^*$, continuous.

\begin{example}
    Examples of the pair  $(\CX,\CX^*)$ include: 
    \begin{enumerate}[label=(\arabic*)]
        \item For some $p,r\in [1,\infty]$ with $\frac 1 p+\frac 1 r\le 1$,  $\CX=L^p$ and $\CX^*=L^r$. This case includes the classical weak topologies on $L^p$-spaces, $p\in[1,\infty)$. 
        \item $(\CX,\CX^*)$ is a pair of Orlicz spaces. Suppose $\Phi\colon [0,\infty)\to[0,\infty)$ is convex, increasing, and satisfies $\Phi(0)=0$ as well as $\lim_{t\to\infty}\frac{\Phi(t)}t=\infty$. If $\overline \Psi$ is another function with these properties that additionally satisfies 
        \[\overline \Psi(t)\ge \sup_{s\ge 0}\{st-\Phi(s)\},\quad t\ge 0,\]
        then the pair of Orlicz spaces $(L^\Phi,L^{\overline\Psi})$ fits our description. We refer to \cite{Orlicz} for details. 
        \item $\CX$ is a rearrangement-invariant space and  $\CX^*$ its associate space or Köthe dual, i.e., those norm-continuous linear functionals of shape $X\mapsto\E[YX]$ for an  integrable density $Y$; see \cite[Section 1.1]{Automatic} for details.
    \end{enumerate}
\end{example}

In this situation, a functional $\ph\colon\CX\to[-\infty,\infty]$ is said to be:
\begin{enumerate}[label=(\alph*)]
\addtocounter{enumi}{3}
    \item {\em proper} if $\ph>-\infty$ and the {\em (effective) domain} of $\ph$ 
    \[\dom(\ph):=\{X\in\CX\mid \ph(X)\in\R\}\]
    is nonempty.
    \item {\em quasiconvex} if all sublevel sets 
    \[\mathcal L_\alpha(\ph):=\{X\in\CX\mid \ph(X)\le \alpha\},\quad \alpha\in[-\infty,\infty],\] 
    are convex. 
    \item {\em $\tau$-lower semicontinuous} ($\tau$-lsc) for a topology $\tau$ on $\CX$ if all sublevel sets $\mathcal L_\alpha(\ph)$, $\alpha\in[-\infty,\infty]$, are $\tau$-closed.
\end{enumerate}
Using \cite[Theorem 3.6]{General}, every proper, quasiconvex, $\sigma(\CX,\CX^*)$-lsc and law-invariant functional is for example automatically $\cx$-consistent. A preference relation $\peq$ is 
\begin{enumerate}[label=(\alph*)]\addtocounter{enumi}{6}
    \item {\em $\tau$-upper semicontinuous} ($\tau$-usc) if the upper level set $\{Y\in\CX\mid X\peq Y\}$ is $\tau$-closed for all $X\in\CX$. 
\end{enumerate}

On the dual space $\CX^*$, a proper functional $\ph$ defined on $\CX$ induces its {\em convex conjugate}, 
\begin{equation}\label{def conj}
    \ph^*(Y):=\sup_{X\in\CX}\{\E[YX]-\ph(X)\},\quad Y\in\CX^*.
\end{equation}
The set $\dom(\ph^*):=\{Y\in\CX^*\mid \ph^*(Y)<\infty\}$ is the domain of the convex conjugate.
We shall go beyond this concept though when talking about the ``dual collapse'' below and will be interested in dual elements $Y\in\CX^*$ with the property that, for a given $\alpha\in\R$, there exists $\beta\in\R$ such that 
\[\ph(X)\le \alpha\quad\implies\quad \E[YX]\le \beta.\]
To this end, we introduce for functionals $\ph\colon\CX\to[-\infty,\infty]$ the associated function $R_\ph\colon[-\infty,\infty]\times\CX^*\to[-\infty,\infty]$ by
    \begin{equation}\label{eq:Rph}R_\ph(\alpha,Y):=\sup\{\E[YX]\mid \ph(X)\le \alpha\}.\end{equation}
The relationship between $\ph^*$ and $R_\ph$ is illustrated in Example~\ref{ex} below. 
For a preference relation $\peq$, we modify this definition slightly as 
\begin{equation}\label{Rpeq}
R_\peq(X,Y):=\inf\{\E[YZ]\mid Z\succeq X\},\quad (X,Y)\in\CX\times\CX^*.\end{equation}
Consequently, $R_\peq(X,Y)>-\infty$ if and only if the evaluation of prospects $Z$ at least as desirable as $X$ with $\E[Y\cdot]$ does not become arbitrarily negative.
The relation between \eqref{eq:Rph} and \eqref{Rpeq} is elucidated by the following lemma:

\begin{lemma}\label{lem:relation}
    Given a functional $\ph\colon\CX\to[-\infty,\infty]$, $c\in\R$ and $Y\in\CX^*$, we have 
\[R_{\peq_\ph}(c,Y)=-R_\ph(\ph(c),-Y).\]
\end{lemma}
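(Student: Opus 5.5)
This is a direct unwinding of the definitions \eqref{eq:peqph}, \eqref{eq:Rph} and \eqref{Rpeq}; no earlier result is needed. We read $c$ as the constant random variable in $L^\infty\subseteq\CX$, so that $\ph(c)$ is a well-defined element of $[-\infty,\infty]$ and may serve as the level $\alpha$ in $R_\ph(\alpha,\cdot)$.

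First, by \eqref{Rpeq} applied to $\peq_\ph$,
\[
R_{\peq_\ph}(c,Y)=\inf\{\E[YZ]\mid Z\in\CX,\ Z\succeq_\ph c\}.
\]
By \eqref{eq:peqph}, $Z\succeq_\ph c$ means $c\peq_\ph Z$, that is, $\ph(c)\ge\ph(Z)$. Hence the index set of the infimum is exactly the sublevel set $\mathcal L_{\ph(c)}(\ph)=\{Z\in\CX\mid \ph(Z)\le\ph(c)\}$.

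Second, we use the elementary identity $\inf_{Z\in A} a_Z=-\sup_{Z\in A}(-a_Z)$, valid in $[-\infty,\infty]$ including for empty $A$, where the convention is $\inf\emptyset=\infty$ and $\sup\emptyset=-\infty$. With $-\E[YZ]=\E[(-Y)Z]$, which is finite since $\CX\cdot\CX^*\subseteq L^1$ and $-Y\in\CX^*$, this gives
\[
R_{\peq_\ph}(c,Y)=-\sup\{\E[(-Y)Z]\mid \ph(Z)\le\ph(c)\}=-R_\ph(\ph(c),-Y)
\]
by \eqref{eq:Rph}.

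The only delicate point is the first step. One must keep straight the direction of \eqref{eq:peqph}: larger $\ph$ means less preferred, so ``at least as desirable as $c$'' translates into $\ph(Z)\le\ph(c)$. Beyond this, the extended-valued cases need no separate treatment. If $\ph(c)=\pm\infty$, the sublevel set is still well defined, since the paper allows $\alpha\in[-\infty,\infty]$. The set is in fact never empty, because $c$ itself belongs to it, so even the empty-set conventions above are not actually needed.
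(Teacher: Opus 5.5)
Your proposal is correct and is the same direct unwinding of definitions that the paper relies on; the paper states this lemma without proof. The two steps are the ones you give: identify $\{Z\mid Z\succeq_\ph c\}$ with the sublevel set $\CL_{\ph(c)}(\ph)$, then use $\inf = -\sup(-\,\cdot)$ together with $\E[(-Y)Z]=-\E[YZ]$.
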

    
Finally, we denote by $q_X$ a quantile function of $X\in L^1$ and by 
\[\es_p(X):=\begin{cases}\frac1{1-p}\int_p^1q_X(s)ds&\quad\text{if }0\le p<1,\\
\lim_{q\uparrow 1}\es_q(X)&\quad\text{if }p=1,\end{cases}\]
the {\em Expected Shortfall} of $X$ at level $p\in[0,1]$.
For fixed $X\in L^1$, the function $[0,1]\ni p\mapsto\es_p(X)$
is nondecreasing.

\subsection{Background}\label{sec:background}

Here, we sketch the background of this paper and recall collapse to the mean results that can be found in the recent literature, paving the way to the unified treatment in Section~\ref{sec:main}.
We begin with \cite[Theorem 5.3]{Liebrich}.

\begin{theorem}\label{thm:LM1}
    For a proper, quasiconvex, $\sigma(\CX,\CX^*)$-lsc and law-invariant functional $\ph\colon \CX\to(-\infty,\infty]$, the following are equivalent: 
    \begin{enumerate}[label=\tn{(\arabic*)}]
        \item $\ph$ is expectation-based.
        \item There exists a nonconstant $Z\in\CX$ with mean zero such that, for all $X\in\CX$,  
        \[\ph(X+tZ)\le\ph(X),\quad t\ge 0.\]
        \item For all $X\in\CX$ there exists a nonconstant $Z\in\CX$ with mean zero such that
\[\ph(X+tZ)\le\ph(X),\quad t\ge 0.\]
        \item For all $\alpha\in\ph(\R)$, 
\[\dom\big(R_\ph(\alpha,\cdot)\big)\subseteq\R.\] 

    \end{enumerate}
\end{theorem}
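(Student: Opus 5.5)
The plan is to derive everything from $\cx$-consistency. By \cite[Theorem 3.6]{General}, $\ph$ is $\cx$-consistent. Since $\E[W]\cx W$ by Jensen, this gives $\ph(\E[W])\le\ph(W)$ for every $W\in\CX$. Each sublevel set $\CL_\alpha(\ph)$ is convex, $\sigma(\CX,\CX^*)$-closed and law invariant, and it is closed under passing to $\cx$-smaller elements. In every implication it will therefore suffice to show $\ph(W)\le\ph(m)$ for all $W\in\CX$ with $\E[W]=m$; the reverse inequality is automatic.

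\emph{(1)$\Rightarrow$(2)$\Rightarrow$(3).} Because the space is atomless, there is a nonconstant bounded $Z$ with mean zero. Then $\E[X+tZ]=\E[X]$, so (1) gives $\ph(X+tZ)=\ph(X)$. The implication (2)$\Rightarrow$(3) is trivial.

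\emph{(3)$\Rightarrow$(1).} This is the main step. Fix $X$ with $m=\E[X]$ and $\alpha=\ph(X)$; if $\alpha=\infty$ there is nothing to prove. Let $Z$ be as in (3), so that $X+tZ\in\CL_\alpha(\ph)$ for all $t\ge0$. Since $m\cx X$, also $m\in\CL_\alpha(\ph)$. Fix $s\ge0$ and $t\ge s$, and put $\lambda=s/t$. Convexity gives
\[
\lambda(X+tZ)+(1-\lambda)m=\lambda X+(1-\lambda)m+sZ\in\CL_\alpha(\ph).
\]
As $t\to\infty$ this converges to $m+sZ$ in $\sigma(\CX,\CX^*)$, because $\E[Y\lambda(X-m)]\to0$ for every $Y\in\CX^*$. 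Closedness then yields $m+sZ\in\CL_\alpha(\ph)$ for all $s\ge0$.

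Next take a bounded $W$ with $\E[W]=m$. I would use the stop-loss characterisation of the convex order \cite{Shaked} to show $W-m\cx sZ$ for all large $s$, which requires $\E[(W-m-k)^+]\le\E[(sZ-k)^+]$ for every $k$.
\begin{itemize}
\item For $k\ge\sup(W-m)$ the left side vanishes.
\item For $k\le\inf(W-m)$ the left side equals $-k$, while the right side is at least $-k$ by Jensen.
\item On the remaining compact range of $k$, the right side is at least $s\E[Z^+]-|k|$. This tends to $\infty$ uniformly because $\E[Z^+]>0$, $Z$ being nonconstant with mean zero.
\end{itemize}
Hence $W\cx m+sZ$, and so $\ph(W)\le\alpha$. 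For general $W\in\CX$ with mean $m$, I would approximate by truncations re-centred to mean $m$. These converge in $\sigma(\CX,\CX^*)$ by dominated convergence, since $WY\in L^1$, so lower semicontinuity gives $\ph(W)\le\alpha=\ph(X)$. Swapping the roles of $X$ and $W$ gives equality, which is (1).

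\emph{(1)$\Leftrightarrow$(4).} Here $\R$ in (4) is read as the constants in $\CX^*$.
\begin{itemize}
\item (1)$\Rightarrow$(4): let $\alpha=\ph(c)$ and let $Y$ be nonconstant. Choose a bounded mean-zero $V$ with $\E[YV]>0$, for instance a centred version of $\ind_{\{Y>\E[Y]\}}$. Then $\ph(c+sV)=\ph(c)=\alpha$ for all $s\ge0$, while $\E[Y(c+sV)]\to\infty$, so $R_\ph(\alpha,Y)=\infty$.
\item (4)$\Rightarrow$(1): suppose some $W$ with mean $c$ has $\ph(W)>\alpha:=\ph(c)$. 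Hahn--Banach separation of $W$ from the closed convex set $\CL_\alpha(\ph)$ yields $Y\in\CX^*$ with $\E[YW]>R_\ph(\alpha,Y)$. This $Y$ lies in the domain, so by (4) it is a constant $y$. But $c\in\CL_\alpha(\ph)$ gives $R_\ph(\alpha,y)\ge yc=\E[yW]$, a contradiction.
\end{itemize}
So $\ph(W)=\ph(c)$ whenever $\E[W]=c$, which is expectation-basedness. I expect the scaling-plus-closure step in (3)$\Rightarrow$(1) to be the main obstacle, in particular the uniform stop-loss comparison and the truncation argument.
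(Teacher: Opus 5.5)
Your proof is correct, but it is organised differently from the paper's. The paper never proves this statement directly. It notes that such a $\ph$ is $\cx$-consistent by \cite[Theorem 3.6]{General} and is quasi-star-shaped, and then reads off (1)--(4) as \ref{F1}, \ref{F2}, \ref{F4}, \ref{F7} of Theorem~\ref{thm1}, which is itself derived from Theorem~\ref{thm2}.

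In the primal part the two arguments share the same idea. Each approximates by bounded variables that are $\cx$-dominated by $m+sZ$ for large $s$, and then invokes lower semicontinuity. You use the stop-loss characterisation, which handles any integrable nonconstant zero-mean $Z$ at once, together with recentred truncations and dominated convergence. The paper's Lemma~\ref{lem} instead uses Expected Shortfall and needs $Z$ to have a lowest atom. So the proof of Theorem~\ref{thm2} first replaces $Z$ by a simple conditional expectation, and approximates $X$ by conditional expectations on finite partitions via \cite[Lemma 4.1]{General}.

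The scaling-plus-closure step you flag as the main obstacle is in fact superfluous. Applying (3) to the constant $X=m$ gives $m+sZ\in\CL_{\ph(m)}(\ph)$ directly, whence $\ph(m)\le\ph(W)\le\ph(m)$. This is the paper's route through \ref{F5}, and it shows that quasiconvexity enters the primal equivalences only through $\cx$-consistency.

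The genuine divergence is in (4)$\Rightarrow$(1). Your Hahn--Banach separation of $W$ from the closed convex set $\CL_{\ph(c)}(\ph)$ is the classical dual argument and is shorter, but it relies on convex sublevel sets. The paper proves \ref{P6}$\Rightarrow$\ref{P5} without any separation. It uses Lemma~\ref{lem:qss} to reduce to two-point $\sigma(A)$-measurable variables, then extracts a recession direction $U$ with $c\preceq c+tU$ from a rescaled unbounded sequence in $\mathcal U(c)$. That route costs more work, but it needs only quasi-star-shapedness, so it covers functionals and incomplete preferences for which no dual representation of level sets is available.

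Your (1)$\Rightarrow$(4) is essentially the paper's \ref{P1}$\Rightarrow$\ref{P6}, with an explicit choice of direction.
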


Conditions (2) and (3) describe local nonexpansiveness and are best understood when the functional $\ph$ measures the financial risk of random prospects:
Given any background risk $X$, one can add $Z$ at an arbitrary exposure without increasing the measured risk. 
Notably though, most collapse-to-the-mean results have been proved for convex rather than merely quasiconvex functionals; see the discussion in \cite{Bellini}. 
As an alternative to the nonexpansiveness conditions in Theorem~\ref{thm:LM1}, \cite[Theorems 4.5 \& 4.7]{Bellini} and \cite[Theorems 5.1 \& 5.2]{Liebrich} show that
variants of the following local linearity conditions imply that a convex functional $\ph$ is expectation-based or, more specifically, an affine function of the mean: There exists a nonconstant $Z\in\dom(\ph)$ and $a\in\R$ such that either
\begin{equation}\label{eq:affine}
    \ph(tZ)=\ph(0)+ta,\quad t\in\R,
\end{equation}
or 
\begin{equation}\label{eq:TI}
    \ph(X+tZ)=\ph(X)+ta,\quad (X,t)\in\CX\times\R.
\end{equation} 

\begin{lemma}
    For a proper, convex and $\sigma(\CX,\CX^*)$-lsc functional $\ph$,  \eqref{eq:affine} and \eqref{eq:TI} are equivalent.
\end{lemma}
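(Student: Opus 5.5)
The implication \eqref{eq:TI}$\implies$\eqref{eq:affine} is immediate, so the substance lies in the converse. For the easy direction, set $X=0$ in \eqref{eq:TI}; this gives $\ph(tZ)=\ph(0)+ta$ for all $t\in\R$. To make sense of the arithmetic, observe that $Z\in\dom(\ph)$ and $\ph(Z)=\ph(0)+a$, so $\ph(0)$ is finite.

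For \eqref{eq:affine}$\implies$\eqref{eq:TI}, I first note that \eqref{eq:affine} with $t=1$ and $Z\in\dom(\ph)$ forces $\ph(0)\in\R$, hence $\ph(tZ)\in\R$ for all $t$. Now fix $X\in\CX$, $t\in\R$ and $\lambda\in(0,1)$, and write
\[(1-\lambda)X+tZ=(1-\lambda)X+\lambda\cdot\tfrac{t}{\lambda}Z.\]
Convexity and \eqref{eq:affine} give
\[\ph\big((1-\lambda)X+tZ\big)\le(1-\lambda)\ph(X)+\lambda\ph(0)+ta .\]
Here one uses the convention that the right-hand side is $+\infty$ if $\ph(X)=\infty$; this causes no trouble since $\ph>-\infty$ by properness. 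The key step is letting $\lambda\downarrow0$. Because $\sigma(\CX,\CX^*)$ is a linear topology, $(1-\lambda)X+tZ\to X+tZ$ in $\sigma(\CX,\CX^*)$, and indeed $\E[Y((1-\lambda)X+tZ)]\to\E[Y(X+tZ)]$ for each $Y\in\CX^*$. Lower semicontinuity therefore yields
\[\ph(X+tZ)\le\liminf_{\lambda\downarrow0}\ph\big((1-\lambda)X+tZ\big)\le \ph(X)+ta .\]

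Next, apply the same inequality to the pair $(X+tZ,-t)$ in place of $(X,t)$. This gives $\ph(X)\le\ph(X+tZ)-ta$, and the two inequalities together give equality when $\ph(X)<\infty$. If $\ph(X)=\infty$, the reverse inequality shows $\ph(X+tZ)=\infty$ as well, so \eqref{eq:TI} holds in the extended sense.

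The only delicate point is the limiting step. A naive approach letting the convex weight on $X$ tend to $1$ via the decomposition $X/(1-\lambda)$ would require finiteness of $\ph$ slightly beyond $X$ along the ray through $X$, which is not available. Putting the vanishing weight on the rescaled $tZ$, whose values are controlled exactly by \eqref{eq:affine}, and invoking $\sigma(\CX,\CX^*)$-lower semicontinuity avoids this.
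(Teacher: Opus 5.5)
Your proof is correct and follows the paper's argument. Like the paper, you write $(1-\lambda)X+tZ$ as the convex combination $(1-\lambda)X+\lambda\cdot\frac{t}{\lambda}Z$, use convexity, \eqref{eq:affine} and $\sigma(\CX,\CX^*)$-lower semicontinuity as $\lambda\downarrow 0$ to get $\ph(X+tZ)\le\ph(X)+ta$, and then obtain the reverse inequality by applying this estimate to $X+tZ$ and $-t$. Your explicit treatment of the case $\ph(X)=\infty$ is a small addition that the paper leaves implicit.
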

\begin{proof}
Clearly, \eqref{eq:TI} implies \eqref{eq:affine}. 
Conversely, \eqref{eq:affine} implies that $\ph(0)=\ph(Z-Z)=\ph(Z)-a\in\R$. Setting $\lambda_n=\frac 1 n$ and using that $\ph$ is lsc and convex, we have for all $X\in\CX$ and $t\in\R$ that
    \begin{align*}\ph(X+tZ)&\le \liminf_{n\to\infty}\ph\big((1-\lambda_n)X+\lambda_n\tfrac{tZ}{\lambda_n}\big)\le \lim_{n\to\infty}(1-\lambda_n)\ph(X)+\lambda_n\ph(0)+at=\ph(X)+at.
\end{align*}
Writing $X$ in the final term of the preceding estimate as $X+tZ-tZ$ and using the estimate itself, we obtain 
\begin{center}
    $\ph(X)+at\le \ph(X+tZ)-at+at=\ph(X+tZ).$
\end{center}
\end{proof}

Moreover, \eqref{eq:TI} has been studied in the context of quasiconvex functionals by \cite[Theorem 5.5]{Liebrich}.
While the relationship of \eqref{eq:affine} and \eqref{eq:TI} to Theorem~\ref{thm:LM1} is not clear {\em a priori}, we shall elucidate it in Theorem~\ref{thm1} below, thereby unifying the convex and the quasiconvex cases. 

Another key result of \cite{Liebrich}---that has been applied in the context of risk sharing by \cite{Liebrich2}---concerns {\em consistent risk measures}. The latter term coined by \cite{Consistent} refers to---not necessarily (quasi)convex---monetary risk measures which are consistent with the {\em increasing convex order} relation between random variables. 
We say that $X,Y\in L^1$ are in increasing convex order ($X\ssd Y$) if 
\[\E[u(X)] \leq \E[u(Y)] \quad \text{for all convex and nondecreasing functions }u \colon \R\to \R.\]
The $\ssd$-consistency of a functional is defined analogously to $\cx$-consistency. 
The next result is \cite[Theorem 5.7]{Liebrich}. 

\begin{theorem}\label{thm:LM2}
    Suppose a  $\sigma(\CX,\CX^*)$-lsc functional $\ph\colon \CX\to(-\infty,\infty]$ has the following properties:
    \begin{enumerate}[label=\tn{(\alph*)}]
        \item $\ph(0)=0$.
        \item $\ph$ is cash-additive, i.e., 
    \begin{equation}\label{eq:cash}\ph(X+c)=\ph(X)+c,\quad X\in\CX,\,c\in\R.\end{equation}
        \item $\ph$ is $\ssd$-consistent.
    \end{enumerate}
    Then the following are equivalent: 
    \begin{enumerate}[label=\tn{(\arabic*)}]
        \item $\ph=\E[\cdot]$.
        \item There exists a nonconstant $Z\in\CX$ such that $\ph(tZ)=t\ph(Z)$, $t\in\R$. 
        \item There exists a nonconstant $Z\in\CX$ with mean zero such that 
        $\sup_{t\ge 0}\ph(tZ)= 0$. 
    \end{enumerate}
    Any of these statements implies
    \[\dom(\ph^*)=\{1\},\]
    and this implication is an equivalence if $\ph$ is additionally star-shaped, i.e., for all $X\in \CX$ and all $\lambda\in(0,1)$, 
    \[\ph(\lambda X)\le \lambda\ph(X).\]
\end{theorem}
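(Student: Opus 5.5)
The plan is to make the convex order do the work, after replacing $Z$ by a centred, symmetric object, and to deduce the dual statements from cash-additivity. Throughout, $\ph$ is $\ssd$-consistent, hence $\cx$-consistent. Since $\ph$ is cash-additive with $\ph(0)=0$, every constant $c$ satisfies $\ph(c)=c$. Moreover, $X\cx \E[X]$ fails in general, but Jensen's inequality gives $\E[X]\cx X$, so $\ph(X)\ge \E[X]$ for all $X\in\CX$. The implications (1)$\Rightarrow$(2) and (1)$\Rightarrow$(3) are trivial.

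For (2)$\Rightarrow$(3), set $m=\E[Z]$ and $W=Z-m$, which is nonconstant with mean zero. Cash-additivity and (2) give $\ph(tW)=t(\ph(Z)-m)$ for all $t\in\R$. The lower bound $\ph\ge\E$ applied at $\pm t$ yields $\pm t(\ph(Z)-m)\ge 0$, so $\ph(Z)=m$. Hence $\ph(tW)=0$ for all $t$, which gives (3).

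The main step, (3)$\Rightarrow$(1), goes as follows.
\begin{enumerate}
\item From $\sup_{t\ge0}\ph(tZ)=0$, lower semicontinuity and $\cx$-consistency, I would produce arbitrarily large mean-zero ``risks'' with risk $0$. The key observation (following the characterisation in \cite{Shaked}) is that for any $X\in L^\infty$ with mean zero there is $t$ large such that $X\cx tZ+U$ for a suitable random variable $U$. More cleanly, I would use that $tZ$ dominates in convex order any mean-zero $X$ with $\|X\|_\infty$ small relative to $t$: since $Z$ is nonconstant, $\E[(tZ-k)^+]\ge \E[(X-k)^+]$ for all $k$ once $t$ is large, by comparing stop-loss transforms.
\item The previous step gives $0=\E[X]\le\ph(X)\le\ph(tZ)\le 0$ for bounded mean-zero $X$. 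Cash-additivity then yields $\ph(X)=\E[X]$ on $L^\infty$.
\item Finally I would extend from $L^\infty$ to $\CX$. For $X\in\CX$, the truncations $X_n$ converge to $X$ in $\sigma(\CX,\CX^*)$ because $XY\in L^1$ for $Y\in\CX^*$ (dominated convergence). Lower semicontinuity then gives $\ph(X)\le\liminf_n\E[X_n]=\E[X]$, and combined with $\ph\ge\E$ this proves (1).
\end{enumerate}
The obstacle is step 1: the stop-loss comparison needs uniformity in $k$. The idea for large $|k|$ is that both sides are governed by the mean zero. For $|k|\le\|X\|_\infty$, the quantity $\E[(tZ-k)^+]$ grows linearly in $t$ because $Z^+\ne0$. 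I would make this rigorous by noting that $\E[(tZ-k)^+]\ge t\E[Z^+]-|k|$ and using $\E[(X-k)^+]\le \|X\|_\infty-k$ for $k\ge -\|X\|_\infty$, while for $k<-\|X\|_\infty$ both stop-loss transforms equal $-k$ plus an excess that is nonnegative for $tZ$ and zero for $X$.

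For the dual statements, if $\ph=\E$ then $\ph^*(Y)=\sup_X\E[(Y-1)X]$, which is finite only if $Y=1$, so $\dom(\ph^*)=\{1\}$. For the converse under star-shapedness, star-shapedness and lower semicontinuity give $\ph\ge$ its closed convex hull $\ph^{**}$. I would instead argue via the function $R_\ph$ from \eqref{eq:Rph}. Star-shapedness together with $\ph(0)=0$ makes $\mathcal L_0(\ph)$ star-shaped about $0$, and I would consider its closed convex hull $C$. A Hahn--Banach separation shows that $C$ is contained in $\{\E[X]\le0\}$, because any $Y$ with $\sup_C\E[Y\cdot]<\infty$ can be normalised by cash-additivity to satisfy $\ph^*(Y)<\infty$, and hence $Y=1$. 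For any mean-zero nonconstant $Z$ and $t\ge0$, I would then aim to show $\ph(tZ)=0$, i.e.\ (3), using $\ph(tZ)\ge0$ and the reverse inequality from the fact that the only supporting functional is the mean. This last step is where the argument is least certain.
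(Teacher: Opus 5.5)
Your primal part is correct but takes a different route from the paper. The dual converse under star-shapedness has a genuine gap.

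\textbf{Primal part.} The paper derives the statement from Corollary~\ref{cor}, hence from Theorem~\ref{thm2}. There $Z$ is first replaced by a simple conditional expectation. An arbitrary $X\in\CX$ is then approximated by conditional expectations along finite partitions (Lemma~\ref{lem}, relying on \cite[Lemma 4.1]{General}). Finally, $X_n\cx\E[X]+tZ$ is checked through the Expected Shortfall characterisation of Lemma~\ref{lem:Shaked}. You instead show directly, with stop-loss transforms, that $X\cx tZ$ for every bounded mean-zero $X$ and every nonconstant mean-zero $Z\in\CX$ once $t\ge 3\|X\|_\infty/\E[Z^+]$:
\begin{itemize}
\item for $|k|\le\|X\|_\infty$, compare $t\E[Z^+]-\|X\|_\infty$ with $2\|X\|_\infty$;
\item for $k>\|X\|_\infty$, the left side vanishes;
\item for $k<-\|X\|_\infty$, one has $\E[(X-k)^+]=-k\le\E[(tZ-k)^+]$.
\end{itemize}
You then pass to $\CX$ by truncation, dominated convergence, lower semicontinuity and $\ph\ge\E$. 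This avoids the reduction to simple $Z$ and the partition approximation. What the paper's set-up buys is a single lemma that also serves the preference framework of Theorem~\ref{thm2}, without cash-additivity. Your (2)$\Rightarrow$(3) and the implication to $\dom(\ph^*)=\{1\}$ are fine.

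\textbf{Why the dual route fails.} Everything you extract from $\dom(\ph^*)=\{1\}$ concerns the support function of $\mathcal L_0(\ph)$, hence only its closed convex hull $C$. Incidentally, $C\subseteq\{\E[\cdot]\le0\}$ is trivial from $\ph\ge\E$; the dual condition gives the reverse inclusion. Also, for $\E[Y]=0$ your normalisation needs the shift by $\sigma(1)=0$ used for $\widetilde Y$ in the paper. But even $C=\{\E[\cdot]\le 0\}$ cannot tell you whether a ray $\{tZ\mid t\ge 0\}$ lies in the non-convex set $\mathcal L_0(\ph)$. For instance, $\ph:=\min\{\es_p,\E[\cdot]+1\}$ with $p\in(0,1)$:
\begin{itemize}
\item satisfies (a)--(c) and is lsc;
\item has $\dom(\ph^*)=\{1\}$ and the same $C$;
\item yet $\ph\neq\E$.
\end{itemize}
It fails only star-shapedness, which your argument discards the moment it passes to $C$. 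So ``the only supporting functional is the mean'' cannot yield $\ph(tZ)\le 0$.

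\textbf{Missing idea.} You need to turn unboundedness into an actual ray, using star-shapedness plus compactness after a finite-dimensional reduction.
\begin{enumerate}
\item Fix $A$ with $\P(A)=\frac12$ and set $Y_n:=1+\frac1n(\ind_A-\ind_{A^c})$. Since $\E[Y_n]=1$ and $Y_n\neq1$, cash-additivity and $\dom(\ph^*)=\{1\}$ give $\sup_{X\in\mathcal L_0(\ph)}\E[Y_nX]=\ph^*(Y_n)=\infty$.
\item Pick $X_n\in\mathcal L_0(\ph)$ with $\E[Y_nX_n]\ge n$. Replacing $X_n$ by $\E[X_n|\sigma(A)]$ keeps it in $\mathcal L_0(\ph)$, by $\cx$-consistency, and leaves $\E[Y_nX_n]$ unchanged.
\item Now $X_n$ lives in a two-dimensional space, with $\|X_n\|_\infty\ge n^2$ and $-\frac1n\|X_n\|_\infty\le\E[X_n]\le 0$.
\item A subsequence of $X_n/\|X_n\|_\infty$ converges to some $U$ with $\|U\|_\infty=1$ and $\E[U]=0$, so $U$ is nonconstant.
\item Star-shapedness gives $\ph(tX_n/\|X_n\|_\infty)\le\frac{t}{\|X_n\|_\infty}\ph(X_n)\le0$ once $\|X_n\|_\infty>t$. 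Closedness of $\mathcal L_0(\ph)$ then yields $\ph(tU)\le 0$ for all $t\ge0$, which is (3).
\end{enumerate}
This is precisely the mechanism of Lemma~\ref{lem:qss} and of the step \ref{P6}$\Rightarrow$\ref{P5} in the paper. There, the absence of cash-additivity forces the additional case distinction.
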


Theorem~\ref{thm:LM2} clarifies our distinction between \emph{primal} and \emph{dual} collapse results mentioned above.
Primal collapse results derive that a law-invariant functional is expectation-based from its local behaviour on the ``primal space'' $\CX$. Examples include the linearity of $t \mapsto \ph(tZ)$ in (2) and the nonexpansiveness condition in (3). 
For dual collapse results, note that it is constants $c \in \CX^*$ that give rise to the functional $c\E[\cdot]$ under the $(\CX,\CX^*)$ pairing. 
Hence, 
dual collapse results equate expectation-basedness with the condition that a dual description of the functional in question---such as the conjugate $\ph^*$---defined on the ``dual space" $\CX^*$ only needs to be evaluated in constant random variables. 
For instance, Theorem~\ref{thm:LM2} shows that expectation-basedness is tantamount to $\dom(\ph^*)=\{1\}$.

The mathematical techniques used in \cite{Bellini} for convex functionals are quite different from the ones employed in \cite{Liebrich}. 
Also, the proofs of Theorems~\ref{thm:LM1} and~\ref{thm:LM2} in the latter work differ substantially.
Theorem~\ref{thm:LM1} relies on dual representations of level sets of quasiconvex functionals.
Consistent risk measures, however, are generally not (quasi)convex and therefore lack such a dual representation.
Accordingly, the proof of Theorem~\ref{thm:LM2} follows the approach of \cite{Consistent} and uses their representation of consistent risk measures on $L^\infty$ as lower envelopes of adjusted Expected Shortfalls.

The main part of the paper, Section~\ref{sec:main}, will show that generalised versions of the results discussed above can be derived from a single streamlined and transparent mathematical foundation. 
On closer inspection, our reasoning turns out to be more akin to the proof of Theorem~\ref{thm:LM2} than that of Theorem~\ref{thm:LM1}.

\subsection{Related literature}\label{sec:literature}

While there are numerous contributions devoted specifically to the col\-lapse-to-the-mean phenomenon, interesting connections and comparisons can also be drawn with other strands of literature that tackle different problems, but are related in spirit. 
This will be the focus of the present subsection. {\em Inter alia}, we aim to highlight the relevance of collapse-to-the-mean studies for risk management more broadly.  

\subsubsection*{Risk reducers}
\cite{Cheung} introduced the notion of a \emph{risk reducer}, later adopted by \cite{He}.
For an integrable loss $X$, a random variable $Z\in L^1$ is a risk reducer if
\begin{equation}\label{eq:risk reducer1}
    X+Z \cx X+\E[Z],
\end{equation}
that is, if adding random net loss $Z$ to initial loss $X$ is preferred in convex order to the addition of the deterministic loss $\E[Z]$.
Equivalently, \eqref{eq:risk reducer1} can be expressed as 
\begin{equation}\label{eq:risk reducer2}
    X+Z-\E[Z] \cx X.
\end{equation}
Both \eqref{eq:risk reducer1} and \eqref{eq:risk reducer2} express unanimous preference among risk-averse expected-utility (EU) agents: faced with background risk $X$, they either prefer adding risk $Z$ to adding its mean $\E[Z]$, or they prefer adding the zero-mean risk $Z-\E[Z]$ to leaving the background risk unchanged. 
While \cite{Cheung} focus on the existence of risk reducers countermonotonic with the background risk, \cite{He} argue that the countermonotonicity assumption is incompatible with common insurance applications and should therefore be dropped.

This intuition is reminiscent of the  collapse-to-the-mean literature and our main results.
Theorem~\ref{thm:LM1} and its generalisation Theorem~\ref{thm1} below, however, study the existence of risks $Z$ with zero mean for which {\em arbitrarily large} exposures do not increase the risk of a given---often deterministic---baseline loss.
Moreover, risk is not evaluated via a stochastic order representing broad consensus among EU agents, but via a specific functional $\ph$ or an idiosyncratic preference relation $\peq$ whose structural properties drive the results.
Finally, note the contrast between \eqref{eq:risk reducer2} and Lemma~\ref{lem}, which shows that for all simple $X,Z$ with $\E[Z]=0$, and for sufficiently large exposures $t$,
\[
    X \cx \E[X] + tZ.
\]
That is, although the mean is unanimously preferred to the risky loss $X$, this preference reverses once the fluctuations of $X$ around its mean are replaced by $tZ$ with sufficiently large variance.

Independent of these contributions, risk reducers have also been introduced recently by \cite{herdegen2025}.
A point of contact between their analysis and the present study is that risk reduction is understood relative to a specific law-invariant functional, in their case  Expected Shortfall applied to payoffs instead of losses.
In contrast to \cite{Cheung} and \cite{He}, a risk reducer is a payoff $Z$ such that, {\em for all $X$},
\[\es_p(X+Z)\le\es_p(X).\]
This immediately implies that, for all $X$ and all $t>0$, 
\[\es_p(X+tZ)\le \es_p(X).\]
which parallels Theorem~\ref{thm:LM1}.
In their terminology, Theorem~\ref{thm:LM1} rules out the existence of nontrivial zero-mean risk reducers.
Their focus, however, is different: they investigate risk reducers that arise as the payoff of a fully leveraged (i.e., zero-cost) portfolio in an underlying financial market, which therefore constitute a type of arbitrage opportunity.

\subsubsection*{Sensitivity to large losses}
Another related contribution is \cite{herdegen2024}.
In the spirit of the calibration theorem of \cite{Rabin}, the authors examine whether risk (utility) functionals correctly assess random future payoffs with downside risk.
This requires that ``positions with sufficiently large loss peaks [should be] deemed unacceptable'', i.e., should be assigned positive risk (or negative utility) once the exposure is sufficiently large.

Formally, if a functional $\ph$ measures the risk of a payoff, it is {\em sensitive to large losses} on a subset $\mathcal M$ of its domain if, for all $X\in\mathcal M$,  
\[\P(X<0)>0\quad\implies\quad\ph(\lambda X)>0\text{ for all }\lambda>0\text{ large enough}.\]
For star-shaped risk functionals, this is equivalent to 
\begin{equation}\label{eq:sensitive}\P(X<0)>0\quad\implies\quad\sup_{\lambda>0}\ph(\lambda X)=\infty;\end{equation}
see \cite[Theorem 3.9]{herdegen2024}. 
Typical choices of $\mathcal M$ are ``pure losses'' ($X\le 0$ a.s.) and ``expected losses'' ($\E[X]\le 0$).
If we take $\mathcal M$ to be zero-mean risks,
\[\mathcal M=\{X\mid \E[X]=0\},\]
then \eqref{eq:sensitive} negates condition (3) in Theorem~\ref{thm:LM1} (there we need to set $X=0$).
In this interpretation, expectation-basedness follows from a lack of sensitivity to large losses.

\subsubsection*{Insurance propensity} 

\cite{Maccheroni} study risk aversion in the context of potentially incomplete and/or intricate individual preferences. Their main assertion is that risk aversion can be fully characterised by the propensity to choose specific types of insurance contracts.
Mathematically, choices are governed by transitive and law-invariant preference relations $\peq$ defined on a space of admissible random wealth variables; admissibility is tantamount to possessing finite absolute moments of all orders at least 1.
This framework provides great flexibility in modelling individual preferences and encompasses many of the preference models considered in the literature; see \cite[pp.~1601--1602]{Maccheroni}.
In Theorem~\ref{thm2}, we adopt this perspective and work with preference relations instead of functionals.

\cite[Proposition 2]{Maccheroni} characterises risk neutrality, i.e., the fact that 
\[X\sim\E[X]\]
holds for all wealths $X$. 
This can easily be seen to be equivalent to $\peq$ being expectation-based as defined in Section~\ref{sec:preliminaries}. Their result shows {\em inter alia} that risk neutrality is equivalent to:
\begin{enumerate}[label=(\arabic*)]
    \item {\em Dependence neutrality}:
    For all admissible random variables $X,Y,Z$, 
    \begin{center}
        $X+Y \sim X+Z$
    \end{center}
    provided $Y$ and $Z$ have the same distribution under $\P$.

    \item {\em Neutrality to full insurance}:
    Suppose $X,Y,Z$ are admissible random variables, $Y=-X-\pi$ for some $\pi\in\R$---that is, $Y$ provides full insurance for random wealth $X$---and $Z$ has the same distribution as $Y$ under $\P$. Then $X+Y \sim X+Z$.
\end{enumerate}

From a risk management perspective, it seems  counterintuitive to be indifferent between full insurance eliminating all residual randomness on one hand, and an arbitrary alternative insurance contract with equal distribution but imperfect hedging power for $X$ on the other hand. 
Therefore, the equivalence between risk neutrality and neutrality to full insurance reiterates the observation that expectation-basedness is a deficiency in actuarial or risk management contexts. 

While dependence neutrality is reminiscent of the conditions appearing in the collapse to the mean results presented here, in particular Theorem~\ref{thm2} below, it should be noted that its logical structure differs. 
It is formulated as a universal ``for all'' condition, whereas the characterisations of expectation-basedness in the results here contain existence statements in their antecedents.


The different mathematical nature of their results is also reflected by the mathematical analysis in~\cite{Maccheroni} being substantially different from our approach. 
The authors do not rely on the convex order, but instead prove and exploit a deep structural result which is easy to state: 
an admissible random variable $X$ has zero mean if and only if there exist two equidistributed admissible random variables $Y$ and $Z$ such that $X$ coincides in distribution with $Y-Z$; see \cite[Lemma~1 \& Theorem~6]{Maccheroni}.
Nevertheless, its proof is highly involved. 
While this result is an independent contribution of~\cite{Maccheroni} which will undoubtedly have more applications in the future, our aim here is to keep the analysis as elementary and self-contained as possible.

\section{Main results}\label{sec:main}

\subsection{The collapse of functionals and preferences}

We begin our discussion of the collapse to the mean with functionals $\ph$ defined on a space $\CX$ as described in Section~\ref{sec:setting}. 
Theorem~\ref{thm1} below sets out the relationships among the following statements about such functionals:

\begin{enumerate}[label=(F\arabic*),ref=\tn{(F\arabic*)}]
    \item\label{F1} $\ph$ is expectation-based.

\item\label{F2} There exists a nonconstant zero-mean $Z\in\CX$ such that, for all $X\in\CX$,  
        \begin{equation}\label{eq:sup}\ph(X+tZ)\le\ph(X),\quad t\ge 0.\end{equation}
        
\item\label{F3} There exists a nonconstant zero-mean $Z\in\CX$ such that, for all $c\in\R$, 
\begin{equation}\label{eq:sup const}\ph(c+tZ)\le\ph(c),\quad t\ge 0.\end{equation}

\item\label{F4} For all $X\in\CX$ there exists a nonconstant zero-mean $Z\in \CX$, possibly depending on $X$, such that \eqref{eq:sup} holds. 

\item\label{F5} For all $c\in\R$ there exists a nonconstant zero-mean $Z\in \CX$, possibly depending on $c$, such that \eqref{eq:sup const} holds.
\item\label{F6} 
There exists a nonconstant $Z\in\CX$ and $a\in\R$ such that, for all $X\in\CX$ and $t\in\R$,
\begin{equation*}
    \ph(X+tZ)=\ph(X)+ta.
\end{equation*}  
\item\label{F7}
For all $\alpha\in\ph(\R)$, 
\begin{equation}\label{eq:R}\dom\big(R_\ph(\alpha,\cdot)\big)\subseteq\R.\end{equation} 
\end{enumerate}

Statements~\ref{F1}--\ref{F6} constitute the ``primal component'' of Theorem~\ref{thm1}, equivalent local properties of $\ph$'s behaviour on $\CX$ that render said functional expectation-based.
\ref{F2}--\ref{F5} describe the nonexpansiveness of $\ph$ along a suitable zero-mean direction $Z$. \ref{F6} is a local linearity condition. 

The theorem's ``dual component'', statement~\ref{F7}, is a condition formulated on the dual space $\CX^*$  that appears {\em verbatim} in Theorem~\ref{thm:LM1} and asserts that $R_\ph(\alpha,Y)$ can only be finite for constant dual elements $Y$.
If $\ph$ has the additional property of \emph{quasi-star-shapedness}, statement \ref{F7} implies expectation-basedness.
Quasi-star-shapedness has recently been introduced by \cite{Hanetal}
and describes that a functional $\ph\colon\CX\to(-\infty,\infty]$ satisfies   
\begin{equation}\label{def:qss}\text{for all }X\in\CX,\,t\in\R,\text{ and }\lambda\in(0,1):\quad \ph(\lambda X+(1-\lambda)t)\le \max\{\ph(X),\ph(t)\}.\end{equation}
This clearly generalises quasiconvexity, which would require the inequality $\ph(\lambda X+(1-\lambda)Y)\le \max\{\ph(X),\ph(Y)\}$ to hold for all $X,Y\in\CX$, i.e., also for nonconstant arguments.

Theorem~\ref{thm1} also covers convex functionals explicitly for the sake of completeness. 

\begin{theorem}\label{thm1}
Let $\ph\colon\CX\to(-\infty,\infty]$ be a proper, $\sigma(\CX,\CX^*)$-lsc and $\cx$-consistent functional.
\begin{enumerate}[label=\tn{(\arabic*)}]
\item Statements \ref{F1}--\ref{F6} are all equivalent and imply statement \ref{F7}. 
\item If the nonconstant $Z\in\CX$ appearing in \ref{F6} additionally satisfies $\E[Z]\neq 0$, then there exist $\alpha,\beta\in\R$ such that 
\[\ph=\alpha\E[\cdot]+\beta.\]
\item Statements \ref{F1}--\ref{F7} are all equivalent if $\ph$ is additionally quasi-star-shaped as defined by \eqref{def:qss}.
\item 
If $\ph$ is convex, \ref{F1}--\ref{F7} are equivalent to
\begin{equation}\label{eq:phi*}\dom(\ph^*)\subseteq\R.\end{equation}
\end{enumerate}
\end{theorem}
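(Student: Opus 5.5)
The plan rests on one lemma about the convex order. Following the pointer to Lemma~\ref{lem} in the related-literature discussion, the key tool is this: for every simple $X$ and every nonconstant zero-mean $Z$, we have $X\cx \E[X]+tZ$ for all sufficiently large $t$. The proof goes through the characterisation in \cite{Shaked} that $X\cx Y$ if and only if $\E[X]=\E[Y]$ and $\E[(X-k)^+]\le\E[(Y-k)^+]$ for all $k$. For $Y=\E[X]+tZ$ the stop-loss transform grows linearly in $t$ away from the mean. Near the mean we get a uniform positive gap, using that $Z$ is nonconstant and $X$ has bounded support. A further step makes $t$ large uniformly in $k$. With this lemma, $\cx$-consistency gives
\[\ph(X)\le\ph(\E[X]+tZ)\le\ph(\E[X])\]
whenever \eqref{eq:sup const} holds at $c=\E[X]$. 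Conversely, $\ph(c)\le\ph(X)$ always holds, since $c\cx X$ by Jensen. Hence $\ph(X)=\ph(\E[X])$ for all simple $X$. To pass to general $X\in\CX$, approximate $X$ by conditional expectations on finite partitions, $X_n=\E[X\mid\CG_n]$. These satisfy $X_n\cx X$, so $\ph(X_n)\le\ph(X)$, and $X_n\to X$ in $\sigma(\CX,\CX^*)$ by martingale convergence, so lower semicontinuity gives $\ph(X)\le\liminf\ph(X_n)=\ph(\E[X])$. Therefore \ref{F5}$\Rightarrow$\ref{F1}.

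For part (1), the trivial implications are \ref{F1}$\Rightarrow$\ref{F2}$\Rightarrow$\ref{F3},\ref{F4}$\Rightarrow$\ref{F5}. For \ref{F1}$\Rightarrow$\ref{F6}, take any nonconstant zero-mean $Z$ and $a=0$. For \ref{F6}$\Rightarrow$\ref{F5}, first use $\cx$-consistency. If $\E[Z]=0$, Jensen gives $\ph(c)\le\ph(c+tZ)=\ph(c)+ta$ for all $t\in\R$, which forces $a=0$ on finite values. Then $Z$ works in \ref{F5}, provided $\ph(c)<\infty$; the infinite case is handled using properness and again $c\cx c+tZ$. If $\E[Z]=m\ne0$, replace $Z$ by a zero-mean nonconstant $Z'$ via law invariance: take $Z'$ equidistributed with $Z$ and set $W=Z-Z'$, which has mean zero and is nonconstant for a suitable choice of $Z'$ (e.g.\ independent). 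Then $\ph(X+tW)=\ph(X)+ta-ta$, but this uses \ref{F6} along $Z'$, which needs a law-invariance argument, so that step needs care. Part (2) follows the same computation: $\ph(c)=\ph(c-\tfrac{c}{m}Z+\tfrac cmZ)$ combined with \ref{F1} gives affinity in $c$, i.e.\ $\ph(c)=\ph(0)+\tfrac{a}{m}c$. For \ref{F1}$\Rightarrow$\ref{F7}, let $Y$ be nonconstant. Choose a nonconstant zero-mean bounded $Z$ with $\E[YZ]>0$, e.g.\ $Z=\ind_{\{Y>q\}}-\P(Y>q)$. Then $\ph(c+tZ)=\ph(c)=\alpha$ while $\E[Y(c+tZ)]\to\infty$, so $R_\ph(\alpha,Y)=\infty$.

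For part (3), under quasi-star-shapedness show \ref{F7}$\Rightarrow$\ref{F5}. Fix $c$ with $\alpha=\ph(c)$ and consider the set $C=\{X:\ph(X)\le\alpha\}$. It is closed, law invariant, and star-shaped about $c$. Suppose, aiming at a contradiction, that for every nonconstant zero-mean $Z$ the ray $c+tZ$ leaves $C$. Take the $\cx$-closed, $\sigma$-closed convex hull of $C$. Then find, via Hahn--Banach and the rearrangement/law invariance of $C$, a nonconstant $Y$ with $R_\ph(\alpha,Y)<\infty$, contradicting \ref{F7}. Concretely, I would test with $Z=\ind_A-\P(A)$ and use star-shapedness to get boundedness of $\E[YX]$ over $C$. \textbf{This step is the main obstacle}, since $C$ is not convex, and the separation must exploit star-shapedness about constants together with $\cx$-closure (which convexifies in distribution).

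Part (4) is standard: for convex $\ph$, $\ph^*(Y)\ge \E[YX]-\alpha$ on $C$, so $Y\in\dom(\ph^*)$ implies $R_\ph(\alpha,Y)<\infty$. Thus \ref{F7}$\Rightarrow$\eqref{eq:phi*}. Conversely, by Fenchel--Moreau, \eqref{eq:phi*} makes $\ph(X)=\sup_{y}\{y\E[X]-\ph^*(y)\}$, which is expectation-based, giving \ref{F1}.
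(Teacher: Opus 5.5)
There are two genuine gaps; the rest holds up. Your \ref{F5}$\Rightarrow$\ref{F1} argument is correct, and via stop-loss transforms it handles a non-simple $Z$ directly, whereas the paper first replaces $Z$ by a simple $\E[Z\mid\CG]$ and then uses Expected Shortfall. Your direct proof of \ref{F1}$\Rightarrow$\ref{F7} and your part (4) are also fine.

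The first gap is \ref{F6}$\Rightarrow$\ref{F5} when $m:=\E[Z]\neq0$. Your independent-copy device does not work as stated. An independent copy of $Z$ need not exist on a general atomless space, e.g.\ if $\CF$ is generated by $Z$. Even for an equidistributed $Z'$, law invariance of $\ph$ transfers \ref{F6} from $Z$ to $Z'$ only if every pair $(X,Z')$ can be realised in law as some $(\tilde X,Z)$, which you have not shown. The detour is unnecessary. Apply \ref{F6} at the point $c-tm$ and at $c$ with $-t$, and use $c-tm=\E[c-tZ]\cx c-tZ$:
\[
\ph\big(c+t(Z-m)\big)=\ph(c-tm)+ta\le\ph(c-tZ)+ta=\ph(c),\qquad t\in\R.
\]
So the nonconstant zero-mean $Z-m$ works in \ref{F5}, for every $c$ and every sign of $m$. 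Your part (2) invokes \ref{F1}, so it only goes through once this step is in place.

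The second, more serious gap is part (3), which carries the new content of the theorem; your proposal leaves it open. Separating the closed convex hull of $C=\CL_\alpha(\ph)$, $\alpha=\ph(c)$, does not help by itself. $C$ is not convex, its convex hull can acquire zero-mean recession directions that are not rays inside $C$, and \ref{F5} needs a ray inside $C$. The paper avoids separation entirely and reduces to two dimensions:
\begin{enumerate}
\item Fix $A$ with $\P(A)=\tfrac12$. For $\sigma(A)$-measurable $Y$ one has $\E[YX]=\E\big[Y\E[X\mid\sigma(A)]\big]$ and $\E[X\mid\sigma(A)]\cx X$. Hence $\sup_C\E[YX]$ may be computed over the $\sigma(A)$-measurable part $C_A$ of $C$, a closed subset of $\R^2$.
\item Testing \ref{F7} with the nonconstant elements $\pm1+\tfrac1n(\ind_A-\ind_{A^c})$ yields unbounded sequences in $C_A$. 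Compactness of the unit sphere in $\R^2$ gives limit directions $U^{(\pm)}$ with $\pm\E[U^{(\pm)}]\ge0$.
\item Quasi-star-shapedness about $c$, with weights $t/\|X_n\|_\infty$, together with closedness of $C$, turns each limit direction $U$ into a ray: $c+tU\in C$ for all $t\ge0$.
\item If one of $U^{(\pm)}$ has mean zero, you are done.
\item Otherwise the means have strictly opposite signs, and $c+t\E[U^{(\pm)}]\cx c+tU^{(\pm)}$ puts every constant into $C$. Quasi-star-shapedness about the constant $-\E[X]$ then gives $\tfrac12(X-\E[X])\in C_A$ for every $X\in C_A$.
\item Testing \ref{F7} with the zero-mean $\ind_A-\ind_{A^c}$ shows these recentred elements are unbounded. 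Their limit direction is a nonconstant zero-mean $U$ with $c+tU\in C$.
\end{enumerate}
The two-dimensional reduction, the case distinction, and the recentring step are the ideas missing from your sketch.
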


Before commenting on Theorem~\ref{thm1} in Section~\ref{sec:comments}, we state a corollary on cash-additive functionals obtained by a direct application of this theorem.

\begin{corollary}\label{cor}
    Consider the following statements about a proper, $\sigma(\CX,\CX^*)$-lsc and $\cx$-consistent functional $\ph\colon \CX\to(-\infty,\infty]$ with the cash-additivity property \eqref{eq:cash}.
    \begin{enumerate}[label=\tn{(C\arabic*)},ref=\tn{(C\arabic*)}]
        \item\label{C1} $\ph=\ph(0)+\E[\cdot]$.
        \item\label{C2} There exists a nonconstant $Z\in\CX$ with mean zero such that 
        $\sup_{t\ge 0}\ph(tZ)\le\ph(0)$.
        \item\label{C3} There exists a nonconstant $Z\in\CX$ and $a\in\R$ such that $\ph(tZ)=\ph(0)+at$, $t\in\R$. 
        \item\label{C4} $\dom(\ph^*)\subseteq\R$.
    \end{enumerate}
    Then:
    \begin{enumerate}[label=\tn{(\arabic*)}]
    \item Statements \ref{C1}--\ref{C3} are all equivalent and imply \ref{C4}. 
    \item 
    If $\ph(0)=0$ and $\ph$ is star-shaped, \ref{C4} also implies \ref{C1}.
    \end{enumerate}
\end{corollary}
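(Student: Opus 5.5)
The plan is to derive Corollary~\ref{cor} from Theorem~\ref{thm1}, using cash-additivity to turn the statements at the origin into the global statements \ref{F1}--\ref{F7}.

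\emph{\ref{C1}$\Rightarrow$\ref{C3}$\wedge$\ref{C2}.} Under \ref{C1}, any nonconstant zero-mean $Z\in L^\infty$ gives $\ph(tZ)=\ph(0)$ for all $t$. This yields \ref{C2}, and also \ref{C3} with $a=0$.

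\emph{\ref{C2}$\Rightarrow$\ref{C1}.} For $c\in\R$ and $t\ge0$, cash-additivity gives
\[
\ph(c+tZ)=c+\ph(tZ)\le c+\ph(0)=\ph(c).
\]
This is \ref{F3}, so Theorem~\ref{thm1}(1) shows that $\ph$ is expectation-based. Hence $\ph(X)=\ph(\E[X])=\E[X]+\ph(0)$.

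\emph{\ref{C3}$\Rightarrow$\ref{C1}.} The idea is to lift \ref{C3} to \ref{F6}. Put $W:=Z-\E[Z]$, which is nonconstant with mean zero, and set $b:=a-\E[Z]$. Then
\[
\ph(tW)=\ph(tZ)-t\E[Z]=\ph(0)+tb .
\]
Since $t\E[Z]\cx tZ$, $\cx$-consistency gives $\ph(t\E[Z])\le\ph(tZ)$, that is, $\ph(0)+t\E[Z]\le \ph(0)+ta$ for all $t\in\R$. As $\ph(0)\in\R$ follows from \ref{C3} with $t=0$, taking $t$ of both signs forces $a=\E[Z]$. Hence $b=0$ and $\ph(tW)=\ph(0)$ for all $t$.

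From here \ref{F3} follows directly: $\ph(c+tW)=c+\ph(tW)=c+\ph(0)=\ph(c)$. Note that \ref{C3} does not give translation along $W$ from an arbitrary $X$, only from constants. Since \ref{F3} is already among the equivalences in Theorem~\ref{thm1}(1), this suffices, and \ref{C1} follows as before. This reduction through $\cx$-consistency is the only non-routine step.

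\emph{\ref{C1}$\Rightarrow$\ref{C4}.} This is a direct computation. With $\ph=\ph(0)+\E[\cdot]$, we get $\ph^*(Y)=\sup_X\E[(Y-1)X]-\ph(0)$, which is finite only if $Y=1$, because $L^\infty\subseteq\CX$ separates $Y-1\neq0$. Alternatively, one could invoke \ref{F7}.

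\emph{Part (2).} The plan is to use Theorem~\ref{thm1}(3); Theorem~\ref{thm1}(4) is not available because $\ph$ need not be convex. First, $\ph$ is quasi-star-shaped. By star-shapedness and $\ph(0)=0$, for $\lambda\in(0,1)$,
\[
\ph(\lambda X+(1-\lambda)t)=\ph(\lambda X)+(1-\lambda)t\le\lambda\ph(X)+(1-\lambda)\ph(t)\le\max\{\ph(X),\ph(t)\},
\]
using $\ph(t)=t$ from cash-additivity; if $\ph(X)=\infty$ the inequality is trivial. Second, \ref{C4} implies \ref{F7}. Fix $\alpha\in\ph(\R)=\R$ and suppose $R_\ph(\alpha,Y)<\infty$. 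For any $X\in\CX$, the variable $X-\ph(X)+\alpha$ lies in the sublevel set at $\alpha$. Therefore
\[
\E[YX]-\ph(X)\E[Y]\le R_\ph(\alpha,Y)-(\alpha)\E[Y].
\]

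It remains to treat the scale of $Y$. If $\E[Y]>0$, I would rescale to $Y/\E[Y]$ and obtain $\ph^*(Y/\E[Y])<\infty$; by \ref{C4}, $Y$ is then constant. If $\E[Y]\le 0$, testing with constants $X=c$ yields
\[
c\,\E[Y]-c\,\E[Y]=0 ,
\]
which gives no information, so I would handle this case separately with the scaling $X\mapsto sX$. This case is the main obstacle. The first attempt would use star-shapedness, $\ph(sX)\le s\ph(X)$ for $s\ge1$ (the reverse inequality of star-shapedness), to show that $Y$ must be constant, and would fall back on $\cx$-consistency together with law invariance if that fails. Once \ref{F7} is established, Theorem~\ref{thm1}(3) gives \ref{F1}, and hence \ref{C1}.
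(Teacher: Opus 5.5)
The gap is in part (2), in the step (C4)$\Rightarrow$(F7). Your argument only handles dual elements with $\E[Y]>0$. The case $\E[Y]\le 0$, which contains every zero-mean $Y$, is left open. The tool you suggest for it is also misstated: with $\ph(0)=0$, star-shapedness gives $\ph(sX)\ge s\ph(X)$ for $s\ge 1$, not $\le$. In fact star-shapedness is not needed in this step at all; the paper uses it only to obtain quasi-star-shapedness.

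The missing idea is to \emph{translate} $Y$ by a constant rather than rescale $X$. The map $Y\mapsto R_\ph(\alpha,Y)$ is a supremum of linear functionals, hence subadditive and positively homogeneous. Moreover, $R_\ph(\alpha,1)=\sup\{\E[X]\mid \ph(X)\le\alpha\}\le\alpha-\ph(0)<\infty$, because $\cx$-consistency and cash-additivity give $\E[X]+\ph(0)=\ph(\E[X])\le\ph(X)$. So for any constant $k>\max\{0,-\E[Y]\}$ you get $R_\ph(\alpha,Y+k)\le R_\ph(\alpha,Y)+kR_\ph(\alpha,1)<\infty$ with $\E[Y+k]>0$. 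Your first case then shows that $Y+k$, hence $Y$, is constant. This is exactly the paper's argument. There it is phrased through the support function $\sigma$ of the acceptance set, the bound $\sigma(1)<\infty$ obtained from $\cx$-consistency, and the normalisation $\widetilde Y=(Y+\E[|Y|]+1)/(\E[Y]+\E[|Y|]+1)$, which has mean one so that $\ph^*(\widetilde Y)=\sigma(\widetilde Y)$.

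Part (1) is correct. Your treatment of (C3)$\Rightarrow$(C1) uses $t\E[Z]\cx tZ$ for $t$ of both signs to force $a=\E[Z]$, then lands directly in (F3) with $W=Z-\E[Z]$. This is a harmless variant of the paper's (C3)$\Rightarrow$(C2) step with $U=\E[Z]-Z$.

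One justification needs repair: $\ph(0)\in\R$ does not follow from (C3) at $t=0$, since that instance is a tautology. It follows from properness: pick $X$ with $\ph(X)<\infty$; then $\ph(0)=\ph(\E[X])-\E[X]\le\ph(X)-\E[X]<\infty$ by cash-additivity and $\cx$-consistency. You need this fact to cancel $\ph(0)$ when forcing $a=\E[Z]$. It is also needed in your (C1)$\Rightarrow$(C4) computation and in the bound on $R_\ph(\alpha,1)$ above.
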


Inspired by \cite{Maccheroni}, we also recast Theorem~\ref{thm1} within the more general framework of preferences over space $\CX$, a rich setting that encompasses many preference models in the literature as discussed above.
By the inference of preference relations from functionals in \eqref{eq:peqph}, Theorem~\ref{thm2} subsumes large parts of Theorem~\ref{thm1} while also accommodating cases in which we remain agnostic about the existence of a numerical representation (which would, for instance, follow from Cantor’s Theorem~\cite[Theorem~6.1]{Gilboa}) or even about completeness of the preferences. 
Theorem~\ref{thm2} relates the following statements about a preference relation $\peq$: 
 
\begin{enumerate}[label=\tn{(P\arabic*)},ref=\tn{(P\arabic*)}]
\item\label{P1} $\peq$ is expectation-based.

\item\label{P2} There exists a nonconstant $Z\in\CX$ with mean zero such that, for all $X\in\CX$, 
\begin{equation}\label{eq:supPREF}X\peq X+tZ,\quad t\ge 0.\end{equation}
        
\item\label{P3} There exists a nonconstant $Z\in\CX$ with mean zero such that, for all $c\in\R$, 
\begin{equation}\label{eq:supPREF const}c+tZ\sim c,\quad t\ge 0.\end{equation}

\item\label{P4} For all $X\in\CX$ there exists a nonconstant $Z\in \CX$ with mean zero such that \eqref{eq:supPREF} holds. 

\item\label{P5} For all $c\in\R$, there exists a nonconstant $Z\in \CX$ with mean zero such that \eqref{eq:supPREF const} holds. 

\item\label{P6} For all $c\in\R$, 
\begin{center}$R_\peq(c,Y)>-\infty\quad\implies\quad Y\in\R$.\end{center}
\end{enumerate}

Again, the $Z$ appearing in \ref{P4} and \ref{P5} may depend on $X$ or $c$, respectively.

\begin{theorem}\label{thm2}
Suppose $\peq$ is a $\cx$-consistent and $\sigma(\CX,\CX^*)$-usc preference relation on $\CX$.
\begin{enumerate}[label=\tn{(\arabic*)}]
    \item Statements \ref{P1}--\ref{P5} are all equivalent and imply statement \ref{P6}.
    \item Implication $\ref{P6}\implies\ref{P1}$ holds if, additionally, for all $X\in\CX$, $x,y\in\R,$ and $\lambda\in(0,1)$, 
\begin{equation}\label{eq:starPREF}x\peq X\text{ and }x\peq y\quad\implies\quad x\peq \lambda X+(1-\lambda)y.\end{equation}
\end{enumerate}
\end{theorem}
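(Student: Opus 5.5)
The proof rests on one structural fact about the convex order, in the spirit of the lemma announced in Section~\ref{sec:literature} (Lemma~\ref{lem}). For simple $X$ and a nonconstant zero-mean $Z$, we have $X\cx \E[X]+tZ$ once $t$ is large enough. To show this, use the characterisation of \cite{Shaked}: $X\cx W$ if and only if $\E[X]=\E[W]$ and $\E[(X-k)^+]\le\E[(W-k)^+]$ for all $k\in\R$. For simple $X$ the stop-loss transform $k\mapsto\E[(X-k)^+]$ agrees with $(\E[X]-k)^+$ outside a compact interval $[a,b]$, namely the range of $X$. On $[a,b]$ the map $t\mapsto\E[(\E[X]+tZ-k)^+]$ is nondecreasing in $t$ by convexity and zero mean. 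It diverges as $t\to\infty$ because $Z$ is nonconstant, and the divergence is uniform in $k\in[a,b]$ since the function equals $t\,\E[(Z-(k-\E[X])/t)^+]\ge t\,\E[Z^+]-\ldots$. Outside $[a,b]$ the required inequality is just Jensen. Hence the comparison holds for large $t$. A mirror statement will also be needed: for bounded zero-mean $W$, it should hold that $X+W$ is dominated in a suitable sense. I will mostly avoid it by working at constants.

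The main chain is \ref{P2}$\Rightarrow$\ref{P4}$\Rightarrow$\ref{P5} and \ref{P2}$\Rightarrow$\ref{P3}$\Rightarrow$\ref{P5}. Both are trivial, or follow by combining \eqref{eq:supPREF} with $\cx$-consistency, since $c\cx c+tZ$ gives $c\succeq c+tZ$. Also \ref{P1}$\Rightarrow$\ref{P2} is immediate because $X$ and $X+tZ$ share their mean. The key implication is \ref{P5}$\Rightarrow$\ref{P1}. Fix $c$ and the corresponding $Z$, so that $c+tZ\sim c$ for all $t\ge0$. Take any simple $X$ with $\E[X]=c$. By the lemma, $X\cx c+tZ$ for large $t$, so $X\succeq c+tZ\sim c$. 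By Jensen, $c\cx X$, so $c\succeq X$. Hence $X\sim c$. For general $X\in\CX$ with mean $c$, I will approximate by conditional expectations $X_n=\E[X\mid\CG_n]$ on finite partitions. These are simple and have mean $c$, and they converge to $X$ in $\sigma(\CX,\CX^*)$; this convergence is the point where the pairing $\CX\cdot\CX^*\subseteq L^1$ and martingale convergence of $\E[Y\mid\CG_n]$ must be checked carefully, and I expect it to be the main technical obstacle. The set $\{Y\mid c\peq Y\}$ is usc-closed and contains every $X_n$ (since $X_n\sim c$), so it contains $X$, giving $c\peq X$. Conversely, $X\cx$-dominates $c$, so $c\succeq X$. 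Thus $X\sim c$ for every $X$ of mean $c$, and transitivity yields \ref{P1}.

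For \ref{P1}$\Rightarrow$\ref{P6}, suppose $Y$ is nonconstant. Choose bounded zero-mean $W$ with $\E[YW]<0$; this is possible because a nonconstant $Y$ is not orthogonal to all zero-mean bounded variables. Then $c+tW\sim c$, so $c+tW\succeq c$, and $R_\peq(c,Y)\le c\E[Y]+t\E[YW]\to-\infty$. This proves part (1).

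For part (2), assume \ref{P6} and \eqref{eq:starPREF}, and argue by contraposition via \ref{P5}. Fix $c$. By $\cx$-consistency, $U:=\{X\mid X\succeq c\}$ is the set of prospects at least as desirable as $c$; it contains $c$ and every $X$ with $X\cx c$. Using \eqref{eq:starPREF} with $x=y=c$, we have $c\peq X\Rightarrow c\peq \lambda X+(1-\lambda)c$, so the set $\{X\mid c\peq X\}$ is star-shaped about $c$, and it is closed by usc. If no nonconstant zero-mean $Z$ satisfies $c+tZ\succeq c$ for all $t\ge0$, the plan is to separate. Star-shapedness and closedness give a closed cone-like region, and its recession directions through $c$ consist only of constants. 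The idea is then to produce a nonconstant $Y$ whose functional $\E[Y\cdot]$ is bounded below on $U$, contradicting \ref{P6}. Making this separation work without convexity is delicate. I would first try it on the closed convex hull, using that $U$ is $\cx$-monotone and law-invariant so that averaging over rearrangements keeps things inside, and then apply Hahn--Banach in the $\sigma(\CX,\CX^*)$-topology.
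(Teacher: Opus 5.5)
Part (1) of your proposal is fine. Your stop-loss-transform proof that $X\cx\E[X]+tZ$ for bounded $X$ and large $t$ is a valid substitute for the Expected Shortfall argument of Lemma~\ref{lem}. Because it does not require $Z$ to be simple, it even lets you skip the paper's reduction to simple $Z$. The remaining steps are those of the paper: approximation by $\E[X|\CG_n]$ (the paper simply invokes \cite[Lemma 4.1]{General} for its $\sigma(\CX,\CX^*)$-convergence), closedness of $\{Y\mid c\peq Y\}$, and a bounded zero-mean $W$ with $\E[YW]<0$ for \ref{P1}$\Rightarrow$\ref{P6}.

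The genuine gap is part (2). There you only offer a plan, and the plan cannot work as stated. Separating from the closed convex hull $C$ of $U=\{X\mid c\peq X\}$ never produces a contradiction with \ref{P6}. Indeed, $C$ and $U$ have the same barrier cone, which under \ref{P6} consists of constants. By the recession-cone/barrier-cone duality for closed convex sets, every zero-mean direction is then automatically a recession direction of $C$. The whole difficulty is transferring such directions back to $U$, which need not be convex: Example~\ref{ex:bigger} is quasi-star-shaped but not quasiconvex. Law invariance and $\cx$-monotonicity do not help, since a convex combination of elements of $U$ with different laws need not be $\cx$-dominated by any of them. Your assertion that the recession directions of $U$ through $c$ ``consist only of constants'' also does not follow from the failure of \ref{P5}. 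Nonconstant directions with nonzero mean may well occur, and that is precisely a case the argument must treat. Finally, in infinite dimensions there is no compactness turning an unbounded sequence in $U$ into a nonzero asymptotic direction.

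What is missing is the paper's reduction to two dimensions (Lemma~\ref{lem:qss}). Fix $A$ with $\P(A)=\frac12$ and take $D=d\ind_A+d'\ind_{A^c}$ with $d\le d'$. Three moves show that $R_\peq(c,D)$ is the infimum of $\E[DX]$ over the set $\mathcal U(c)$ of $x\ind_A+x'\ind_{A^c}$ with $x\ge x'$ and $c\peq x\ind_A+x'\ind_{A^c}$:
\begin{itemize}
\item replace $X$ by $\E[X|\sigma(A)]\succeq X$, using $\cx$-consistency;
\item swap the two values, using law invariance;
\item apply the Hardy--Littlewood inequality.
\end{itemize}
In this two-dimensional set, normalised unbounded sequences have convergent subsequences. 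Condition \eqref{eq:starPREF} together with upper semicontinuity makes every such limit $U$ a genuine recession direction: $c\peq x+tU$ whenever $c\peq x$.

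Applying \ref{P6} to the nonconstant duals $\pm1-\frac1n\ind_A+\frac1n\ind_{A^c}$ yields directions $U^{(\pm1)}$ with $\E[U^{(1)}]\le0\le\E[U^{(-1)}]$. If one of these means vanishes, you have \ref{P5}. Otherwise, $\cx$-consistency gives $c\peq x$ for every constant $x$. Then \eqref{eq:starPREF} with $y=-\E[X]$ keeps the centred elements $\frac{X-\E[X]}2$ in $\mathcal U(c)$. Applying \ref{P6} to the zero-mean dual $-\ind_A+\ind_{A^c}$ forces this centred set to be unbounded, which produces a nonconstant zero-mean recession direction. None of these ingredients appears in your proposal.
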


\subsection{Discussion of the main results}\label{sec:comments}

First, it is important to note that the primal part of Theorem~\ref{thm1} does not require any (quasi)convexity assumption.
Hence, Theorem~\ref{thm:LM1} and the results about convex functionals discussed in its context are merely special cases of Theorem~\ref{thm1}.
Moreover, consistent risk measures as introduced in Theorem~\ref{thm:LM2} are $\cx$-consistent, which means that Corollary~\ref{cor} covers Theorem~\ref{thm:LM2} as a special case.

The dual collapse to the mean in Theorem~\ref{thm1}---the equivalence between statements~\ref{F1} and \ref{F7} under quasi-star-shapedness---is genuinely new.
Notably, most existing dual collapse results have been obtained via the Fenchel-Moreau representation of convex functionals.
However, Fenchel-Moreau type representation results for \emph{quasi-star-shaped} functionals are to our knowledge not available.
Instead, the proof of the implication~$\ref{F7}\implies\ref{F1}$ below is inspired by the strategy employed in \cite[Theorem~5.7, (iv)$\implies$(iii)]{Liebrich}.
It avoids imposing stronger structural properties on the functional, such as (quasi)convexity, cash-additivity, or monotonicity with respect to the a.s.\ order, that is, the requirement that $\ph(X)\le \ph(Y)$ whenever $X\le Y$ a.s.
Nevertheless, $\sigma(\CX,\CX^*)$-lsc quasiconvex functionals and consistent risk measures are covered as special cases, and Theorem~\ref{thm1} also includes the dual statements of Theorems~\ref{thm:LM1} and~\ref{thm:LM2}.

Another---perhaps obvious---consequence of Theorem~\ref{thm1} is that any proper, $\sigma(\mathcal X,\mathcal X^*)$-lsc, $\cx$-consistent, and quasi-star-shaped functional satisfying condition~\ref{F7} must in fact be quasiconvex.
This follows immediately from the fact that such functionals are expectation-based.

The results are also sharp, as illustrated already by \cite[Example E.3]{Liebrich} and \cite[Example 2.4]{Liebrich2}.

We now zoom in on some of the technical assumptions involved in the preceding results and illustrate first the relationship between the functional $\ph$, the function $R_\ph$, and the convex conjugate $\ph^*$ in three case studies.

\begin{example}\label{ex}\,
\begin{enumerate}[label=(\arabic*)]
    \item For a proper and cash-additive $\ph$ consider $\mathcal A_\ph:=\mathcal L_0(\ph)\subsetneq \CX$, a set which is usually called {\em acceptance set} in the context of (monetary) risk measures. 
    It is of crucial importance for cash-additive functionals as they can be recovered from their acceptance set by the formula 
    \[\ph(X)=\inf\{m\in\R\mid X-m\in\mathcal A_\ph\}.\]
    The set $\mathcal A_\ph$ induces on $\CX^*$ the functional 
    \[\sigma(Y):=\sup_{X\in\mathcal A_\ph}\E[YX]\]
    whose effective domain is the so-called barrier cone of $\mathcal A_\ph$. In view of cash-additivity of $\ph$, we have for all $\alpha\in\R$ and $Y\in\CX^*$ that
    \[R_\ph(\alpha,\cdot)=\sigma(\cdot)+\alpha\E[\cdot]\quad\text{and}\quad\ph^*(Y)=\begin{cases}\sigma(Y)&~\text{if }\E[Y]=1,\\[-0.6ex]
    \infty&~\text{otherwise}.\end{cases}\]
    Consequently, each $\dom(R_\ph(\alpha,\cdot))$ agrees with said barrier cone, which can be rephrased for arbitrary $Y\in\CX^*$ and $\alpha\in\R$ as 
\begin{center}$R_\ph(\alpha,Y)<\infty\quad\iff\quad\sigma(Y)<\infty.$\end{center}
Conditions~\eqref{eq:R} and \eqref{eq:phi*} become equivalent in the context of Theorem~\ref{thm1}; see the proof of Corollary~\ref{cor}(2) below. 
    \item If $\ph$ is a proper convex functional and $\alpha\in\R$, then
        \[R_\ph(\alpha,Y)\le \ph^*(Y)+\alpha,\quad Y\in\CX^*.\]
    Hence, $\dom(\ph^*)\subseteq\dom(R_\ph(\alpha,\cdot))$ for such $\alpha$, and \eqref{eq:R} implies \eqref{eq:phi*}. 
    \item In case of a $\sigma(\CX,\CX^*)$-lsc and quasiconvex functional $\ph$, $\CL_\alpha(\ph)$---and thus $\ph$ itself---can be reconstructed from $R_\ph(\alpha,\cdot)$:
\[\CL_\alpha(\ph)=\{X\in\CX\mid \E[YX]\le R_\ph(\alpha,Y),~Y\in\CX^*\}.\]
\end{enumerate}
\end{example}

Theorem~\ref{thm1} concerns a much wider class of functionals than those discussed in the preceding literature and not only unifies but genuinely improves known results. Indeed, the class of proper, $\cx$-consistent, $\sigma(\CX,\CX^*)$-lsc and quasi-star-shaped functionals extends beyond consistent risk measures and law-invariant quasiconvex functionals. Although this is already suggested by \cite[Lemma 1]{Hanetal}, we provide an explicit example.

\begin{example}\label{ex:bigger}
Consider the risk-seeking utility functions 
\[u(x)=\begin{cases}x&\text{if }x<0,\\[-0.5ex]
2x&\text{if }x\in[0,2],\\[-0.5ex]
x^2&\text{if }x>2,\end{cases}\qquad v(x)=\begin{cases}x&\text{if }x<\tfrac 1 2,\\[-0.1ex]
6x-2.5&\text{if }x\in[\tfrac 1 2,3+\sqrt{\tfrac{13}2}],\\[-0.1ex]
x^2&\text{if }x\ge 3+\sqrt{\tfrac{13}2},\end{cases}\qquad x\in\R,\]
let $C_u,C_v$ denote the associated certainty equivalent maps on the space $L^\infty$, and set
\[\ph(X):=\min\{C_u(X),C_v(X)\},\quad X\in L^\infty.\]
$\ph$ is law-invariant, continuous, and $\cx$-consistent by construction. By Lemma~\ref{lem:Svindland} below, it is also $\sigma(L^\infty,L^\infty)$-lsc. 

Functional $\ph$ is quasi-star-shaped. Indeed, for $X\in L^\infty$, $t\in\R$ and $\lambda\in(0,1)$ arbitrarily chosen, 
\begin{align*}
    \ph(\lambda X+(1-\lambda)t)&\le \min\{\max\{C_u(X),t\},\max\{C_v(X),t\}\}\\
    &=\begin{cases}t&\text{if }\min\{C_u(X),C_v(X)\}\le t\\[-0.4ex]
    \min\{C_u(X),C_v(X)\}&\text{if }\min\{C_u(X),C_v(X)\}>t
    \end{cases}\\
    &=\max\{\min\{C_u(X),C_v(X)\},t\}.
\end{align*}

Functional $\ph$ is not a consistent risk measure because it lacks cash-additivity. 
To see this, consider $X\in L^\infty$ with the property $\P(X=6)=\P(X=8)=\frac 1 2$ and compute 
\[C_u(X+c)=C_v(X+c)=\sqrt{50+14c+c^2},\quad c\ge 0.\]
\[\frac 1 2(6+c)^2+\frac 1 2(8+c)^2\]

Also, $\ph$ is not quasiconvex. More precisely,
suppose that the distribution of $X\in L^\infty$ has density 
$f=\tfrac 1 4\ind_{[-2,0]}+\frac 1 2\ind_{(0,1]}$ and define $Y$ using the sign function as $Y=\frac 1 2\sgn(X)$. In particular, $\E[Y]=0$ and $\E[X]=-\tfrac 1 4$. 
For $\lambda\in[0,1]$ we compute
\begin{align*}
\E\big[u\big(\lambda X+(1-\lambda)Y\big)\big]
  &= \E\big[\lambda X+(1-\lambda)Y\big]
   + \E\big[\big(\lambda X+\tfrac 1 2(1-\lambda)\big)\ind_{\{X>0\}}\big]\\
   &=-\tfrac{\lambda}4+\tfrac 1 2\lambda \E[X|X>0]+\tfrac 1 2\cdot\tfrac 12(1-\lambda)\\
  &= -\tfrac \lambda 4+\tfrac \lambda 4 +\tfrac 1{4}(1-\lambda)=\tfrac 1{4}(1-\lambda).
\end{align*}
Consequently, $\ph(X)\le C_u(X)=0$ and $C_u(\lambda X+(1-\lambda)Y)=\tfrac{1-\lambda}8>0$ whenever $\lambda\in[0,1)$.
Also, $\E[v(Y)]=0$, i.e., $\ph(Y)\le C_v(Y)=0$, and 
\begin{align*}
    \E[v(X)]&=\E[X]+\E[(5X-2.5)\ind_{\{X>\frac 1 2\}}]\\
    &=-\tfrac 1 4+\tfrac 1 4\E\big[(5X-2.5)\big|X>\tfrac 1 2\big]\\
    &=-\tfrac 1 4+\tfrac 5 4\cdot\tfrac 3 4-\tfrac 1 4\cdot\tfrac 5 2=\tfrac 1 {16}.
\end{align*}
As the function $[0,1]\ni \lambda\mapsto C_v(\lambda X+(1-\lambda)Y)$ is continuous, we can choose $\lambda\in(0,1)$ such that also $C_v(\lambda X+(1-\lambda)Y)>0$.  
Consequently, we establish for this $\lambda$ that 
\begin{align*}\max\{\ph(X),\ph(Y)\}&\le\max\{C_u(X),C_v(Y)\}=0\\
&<\min\big\{C_u\big(\lambda X+(1-\lambda)Y\big), C_v\big(\lambda X+(1-\lambda)Y\big)\big\}=\ph\big(\lambda X+(1-\lambda)Y\big).\end{align*}
\end{example}

In sum, Theorem~\ref{thm1} and Corollary~\ref{cor} together unify and generalise the existing results discussed in Section~\ref{sec:background}.

Regarding Theorem~\ref{thm2}, note that the quasi-star-shapedness property in \cite{Hanetal} is motivated from a preferential viewpoint, making a formulation of the collapse at the level of preferences only natural.
Also, the transition between functionals and preferences is seamless: Preference relation $\peq_\ph$ defined in \eqref{eq:peqph} satisfies \eqref{eq:starPREF} if and only if $\ph$ is quasi-star-shaped.

\cite[Proposition~1]{Hanetal}, however, proposes to capture quasi-star-shapedness axiomatically through the weaker condition
\begin{center}$X\sim t \quad\implies\quad X \peq \lambda X + (1-\lambda)t,$\end{center}
which is a relaxation of the uncertainty aversion axiom of \cite{Variational}. 
For many monotone preferences this condition is equivalent to quasi-star-shapedness and hence is subsumed by \eqref{eq:starPREF} in those cases.

\section{Proofs of the main results}\label{sec:proofs}

\subsection{Auxiliary results}

In view of our goal to make the mathematical foundations of the collapse-to-the-mean results as transparent as possible, we streamline the analysis by first isolating three auxiliary results.
The key insight is that substantial generalisation and simplification can be achieved by working directly with the convex order rather than taking a dual perspective.
The first ingredient is therefore the characterisation of the convex order found in \cite[Theorem 3.A.5]{Shaked}:

\begin{lemma}\label{lem:Shaked}
For two random variables $X,Y\in L^1$, the following are equivalent:
\begin{enumerate}[label=\tn{(\arabic*)}]
    \item $X\cx Y$.
    \item $\E[X]=\E[Y]$ and  $\operatorname{ES}_p(X) \leq \operatorname{ES}_p(Y)$ for all $p \in (0,1)$.
\end{enumerate}
\end{lemma}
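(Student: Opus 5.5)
We prove the two implications separately, using the representation $\es_p(X)=\frac1{1-p}\int_p^1 q_X(s)\,ds$ together with the standard identity for the integrated quantile function,
\[
\int_p^1 q_X(s)\,ds=\sup\{\E[X\ind_A]\ \text{(suitably randomised)}\mid \P(A)=1-p\}=\min_{x\in\R}\big\{x(1-p)+\E[(X-x)^+]\big\}.
\]
The second expression is the Rockafellar--Uryasev formula. It holds on an atomless space by the Hardy--Littlewood rearrangement inequality, and its minimum is attained at $x=q_X(p)$.

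\emph{(1) implies (2).} Take $u(x)=x$ and $u(x)=-x$ in \eqref{def conv}; this gives $\E[X]=\E[Y]$. Next take $u(x)=(x-k)^+$ to get $\E[(X-k)^+]\le\E[(Y-k)^+]$ for every $k\in\R$. Fix $p\in(0,1)$ and set $k=q_Y(p)$. Then
\[
(1-p)\es_p(X)\le k(1-p)+\E[(X-k)^+]\le k(1-p)+\E[(Y-k)^+]=(1-p)\es_p(Y).
\]
The first inequality uses the minimum formula for $X$, and the final equality is the attainment of that minimum for $Y$.

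\emph{(2) implies (1).} We first reduce to stop-loss transforms. Equality of means plus $\E[(X-k)^+]\le\E[(Y-k)^+]$ for all $k$ implies $X\cx Y$. To see this, let $u$ be convex and $X,Y$ integrable. On any compact interval, $u$ is a uniform limit of functions of the form $a+bx+\sum_i c_i(x-k_i)^+$ with $c_i\ge0$. On the tails, approximate $u$ by functions that are affine outside $[-n,n]$, which are of exactly this form. Then pass to the limit by monotone convergence: the affine-outside truncations $u_n$ increase to $u$ and are bounded below by an integrable affine minorant. The real content is therefore to show
\[
\E[(X-k)^+]\le\E[(Y-k)^+]\quad\text{for all }k\in\R.
\]
Define $\Phi_X(p):=\int_p^1 q_X(s)\,ds=(1-p)\es_p(X)$. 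The hypothesis says $\Phi_X\le\Phi_Y$ on $(0,1)$, with equality at $p=0$ from the means. The Legendre-type duality
\[
\E[(X-k)^+]=\sup_{p\in[0,1]}\big\{\Phi_X(p)-k(1-p)\big\}
\]
holds because $\Phi_X$ is concave in $p$ with derivative $-q_X(p)$, so the supremum is attained where $q_X(p)=k$. Applying this to $X$ and $Y$ gives the comparison immediately, since the right-hand side is monotone in $\Phi$. The endpoint $p=1$ contributes $0$ and $p=0$ contributes $\E[X]-k$, both consistent with the hypothesis.

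The main obstacle is the approximation step from stop-loss order plus equal means to general convex $u$ on $L^1$, in particular making sure the limit $\E[u_n(X)]\to\E[u(X)]$ is legitimate when $\E[u(X)]=\infty$. We handle this by noting that $u_n\uparrow u$ and $u_n\ge\ell$ for an affine $\ell$, so monotone convergence applies on both sides even when the limits are infinite. The inequality $\E[u_n(X)]\le\E[u_n(Y)]$ for each $n$ follows since $u_n$ is a finite nonnegative combination of $x$, $-x$, constants, and call payoffs. The Hardy--Littlewood and quantile facts are standard and are taken as known, in line with the citation to \cite[Theorem 3.A.5]{Shaked}.
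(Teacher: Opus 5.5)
Your proof is correct, up to one step that needs tightening. The paper gives no argument of its own here; it imports the lemma from \cite[Theorem~3.A.5]{Shaked}. What you supply is a self-contained version of the standard route behind that citation. First you reduce $\cx$ to equal means plus stop-loss order. Then you use that the stop-loss transform $k\mapsto\E[(X-k)^+]$ and the integrated quantile $\Phi_X(p)=(1-p)\es_p(X)$ determine each other through two dual formulas, each monotone in its input: $\Phi_X(p)=\min_k\{k(1-p)+\E[(X-k)^+]\}$ for $p\in(0,1)$, and $\E[(X-k)^+]=\max_{p\in[0,1]}\{\Phi_X(p)-k(1-p)\}$. Both formulas follow from the monotonicity of $q_X$ alone, so your appeal to atomlessness and Hardy--Littlewood is unnecessary; the statement is purely distributional. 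The gain over the paper's citation is transparency, at the cost of a few lines.

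The step to tighten is the passage from stop-loss order to general convex $u$. The functions $u_n$ you feed into monotone convergence are the truncations equal to $u$ on $[-n,n]$ and affine outside. These do increase to $u$, but they are not finite nonnegative combinations of call payoffs unless $u$ is already piecewise linear on $[-n,n]$. So the claim $\E[u_n(X)]\le\E[u_n(Y)]$ does not follow as stated: the two properties you use are attached to different constructions.

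One repair is to take $u_n(x):=\max_{k\in D_n}\{u(k)+s_k(x-k)\}$, with $D_n:=2^{-n}\mathbb Z\cap[-n,n]$ and $s_k$ a subgradient of $u$ at $k$.
\begin{itemize}
\item Each $u_n$ is convex and piecewise linear with finitely many kinks, hence of the form $a+bx+\sum_ic_i(x-k_i)^+$ with $c_i\geq0$.
\item The grids are nested, so $u_1\leq u_n\leq u_{n+1}\leq u$.
\item $u_n\uparrow u$ pointwise, because $\bigcup_nD_n$ is dense, $u$ is continuous, and its subgradients are locally bounded.
\end{itemize}
Your monotone convergence argument then goes through as written, including when $\E[u(X)]=\infty$; use as affine minorant the supporting line at $0\in D_1$. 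Alternatively, keep your truncations and write $u_n(x)=a_n+b_nx+\int_{[-n,n]}(x-k)^+\,\mu_n(dk)$, where $\mu_n$ is the finite second-derivative measure of $u_n$. Tonelli then turns stop-loss order into $\E[u_n(X)]\le\E[u_n(Y)]$.
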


Lemma~\ref{lem:Shaked} leads to the following consequence which turns out to be the main tool for the primal collapse to the mean.

\begin{lemma}\label{lem}
Let $X\in \CX$ be arbitrary and suppose $Z$ is a nonconstant simple random variable with  mean zero.
Then there are sequences $(X_n)\subseteq L^\infty$ and $(t_n)\subseteq(0,\infty)$ with the following properties:
\begin{enumerate}[label=\tn{(\alph*)}]
\item $X$ is the $\sigma(\CX,\CX^*)$-limit of $(X_n)$. 
\item $X_n\cx X$ for all $n\in\N$.
\item $X_n\cx \E[X]+tZ$ for all $t\ge t_n$ and $n\in\N$. 
\end{enumerate}
\end{lemma}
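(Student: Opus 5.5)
Given $X\in\CX$, I take $X_n$ to be conditional expectations of $X$ on finite partitions. Concretely, let $X_n$ be a bounded, mean-preserving discretisation of $X$. Set $X_n:=\E[X^{(n)}\mid\CG_n]$, where $X^{(n)}$ is $X$ truncated at levels $\pm n$ and recentred by a constant so that $\E[X^{(n)}]=\E[X]$. Here $\CG_n$ is the finite $\sigma$-algebra generated by the dyadic level sets $\{k2^{-n}\le X^{(n)}<(k+1)2^{-n}\}$. A cleaner alternative is $X_n:=\E[X\mid\CG_n]$ with $\CG_n$ generated by the partition $\{X\le -n\},\{X>n\}$ together with the dyadic cells of $X$ on $[-n,n]$. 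The sets $\{X\le -n\}$ and $\{X>n\}$ have positive probability only on the tails, and I must handle their conditional means, which are finite since $X\in L^1$. Then each $X_n$ is simple, hence in $L^\infty$. Jensen's inequality for conditional expectations gives $\E[u(X_n)]\le\E[u(X)]$ for convex $u$, which is (b), and $\E[X_n]=\E[X]$.

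For (a), I need $\E[YX_n]\to\E[YX]$ for every $Y\in\CX^*$. I write $\E[YX_n]=\E[\E[Y\mid\CG_n]X]$ and argue via martingale convergence. The $\sigma$-algebras $\CG_n$ increase (the partitions refine) and generate a $\sigma$-algebra making $X$ measurable, so $X_n\to X$ a.s. and in $L^1$ as a uniformly integrable martingale. To pass to the $\sigma(\CX,\CX^*)$ pairing I would split $\E[Y(X_n-X)]$ into the part where $|Y|\le M$, which is controlled by $M\|X_n-X\|_{L^1}$, and a remainder. The remainder I bound by $\E[|Y|\,|X_n|\ind_{\{|Y|>M\}}]+\E[|Y||X|\ind_{\{|Y|>M\}}]$. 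This requires some domination of $|X_n|$ that survives multiplication by $Y$. The dominating variable $\sup_n|X_n|$ need not be in $\CX$, so this step is the main obstacle. I would instead use law-invariance and the convex order: since $X_n\cx X$ and $Y\in\CX^*$, the Hardy--Littlewood inequality gives $\E[|Y||X_n|\ind_A]\le\int_0^{\P(A)}q_{|Y|}(1-s)q_{|X_n|}(1-s)\,ds$. Because $|X_n|\le\E[|X|\mid\CG_n]$, ES-domination (Lemma~\ref{lem:Shaked}) controls $\int_0^{a}q_{|X_n|}(1-s)ds$ by the same integral for $|X|$. An Abel-summation/majorisation argument then bounds the Hardy--Littlewood integral by $\int_0^{\P(A)}q_{|Y|}(1-s)q_{|X|}(1-s)ds$, which tends to $0$ as $M\to\infty$ because $|X||Y|$-type rearranged products are integrable by $\CX\cdot\CX^*\subseteq L^1$ and law-invariance of the spaces.

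For (c), fix $n$; $X_n$ is simple with mean $\E[X]$, so it is bounded, say $|X_n-\E[X]|\le K_n$. By Lemma~\ref{lem:Shaked}, it suffices to show $\es_p(X_n)\le\es_p(\E[X]+tZ)=\E[X]+t\,\es_p(Z)$ for all $p\in(0,1)$ and all large $t$. The left side is at most $\E[X]+\min\{K_n,\ \E[(X_n-\E X)^+]/(1-p)\}$. Because $Z$ is simple, nonconstant and centred, $\es_p(Z)$ is strictly positive on $(0,1)$. Moreover, $(1-p)\es_p(Z)=\int_p^1q_Z$ is concave in $p$ and linear near $p=1$ with slope $-\max Z$, where $\max Z>0$. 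Likewise, $p\,\es_p(Z)$ behaves like $p\cdot(-\min Z)$ near $0$ with $\min Z<0$, by the mean-zero identity $\int_p^1q_Z=-\int_0^pq_Z$. Hence $\es_p(Z)\ge c_Z\min\{p,1\}$ type bounds hold: near $p=0$ I get $\es_p(Z)\ge p(-\min Z)/(1-p)\cdot$const. The matching bound for $X_n$ is $\es_p(X_n)-\E[X]=\int_0^p(\E X-q_{X_n})/(1-p)\le pK_n/(1-p)$. So on $(0,1)$ the ratio $(\es_p(X_n)-\E X)/\es_p(Z)$ is bounded, both near $0$ (ratio $\le K_n/(-\min Z)$) and near $1$ (ratio $\le K_n/\max Z$), and continuous in between. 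Let $t_n$ be its supremum.
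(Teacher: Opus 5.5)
Your proposal is correct. The construction $X_n=\E[X\mid\CG_n]$ and the Jensen argument for (b) are exactly the paper's. Your argument for (c) is the paper's argument in different packaging. Both reduce to Lemma~\ref{lem:Shaked} and compare $\es_p(X_n)$ with $\E[X]+t\,\es_p(Z)$ separately in two regimes:
\begin{itemize}
\item For small $p$, the exact formula $(1-p)\es_p(Z)=-p\min Z$ is used. Your bound $K_n/(-\min Z)$ is the paper's requirement $c+t_nz\le m_n$, with $m_n$ replaced by the cruder $\E[X]-K_n$.
\item For the remaining $p$, strict positivity and monotonicity of $p\mapsto\es_p(Z)$ finish the job. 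The paper does this through $\es_1(X_n)\le c+t_n\es_{\pi_n}(Z)$.
\end{itemize}

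The genuine difference is (a). The paper simply invokes \cite[Lemma 4.1]{General}. You instead prove the $\sigma(\CX,\CX^*)$-convergence from scratch: you combine $\|X_n-X\|_1\to0$ with a bound on $\E[|Y||X_n|\ind_{\{|Y|>M\}}]$ that is uniform in $n$. That bound comes from the Hardy--Littlewood inequality and the weak majorisation $\int_0^a q_{|X_n|}(1-s)\,ds\le\int_0^a q_{|X|}(1-s)\,ds$. This is sound, makes the lemma self-contained, and shows exactly where law invariance of the spaces and $\CX\cdot\CX^*\subseteq L^1$ enter; the cost is length. One step deserves to be made explicit: the finiteness of $\int_0^1 q_{|X|}(s)q_{|Y|}(s)\,ds$. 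It follows from choosing $\tilde X\sim X$ and $\tilde Y\sim Y$ with $|\tilde X|,|\tilde Y|$ comonotone, which is possible on the atomless space. Law invariance then gives $\tilde X\in\CX$ and $\tilde Y\in\CX^*$, so $\tilde X\tilde Y\in L^1$.

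A few small repairs are needed:
\begin{itemize}
\item Drop the truncate-and-recentre variant, since your Jensen step only covers $\E[X\mid\CG_n]$.
\item Use half-open cell conventions consistent with the tail sets $\{X\le -n\}$, $\{X>n\}$, so that the partitions actually refine. Alternatively, note that your proof of (a) only needs $\|X_n-X\|_1\to0$, which holds without refinement.
\item Take $t_n$ to be the supremum plus one, since the supremum is $0$ when $X_n$ is constant and $t_n$ must be strictly positive.
\item The identity near $p=0$ is $(1-p)\es_p(Z)=-p\min Z$, not a statement about $p\,\es_p(Z)$.
\end{itemize}
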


\begin{proof}
Abbreviate $c:=\E[X]$ and let $z$ be the minimal value $Z$ takes with positive probability. As $Z$ is not constant and has mean zero, $z<0$. 
Next, define for $n\in\N$
\[\Pi_n:=\big\{\{X<-n\},\{X\ge n\}\big\}\cup\big\{\{-n+(i-1)2^{-n}\le X<-n+i2^{-n}\}\mid i=1,...,2^{n+1}n\big\},\]
$\mathcal G_n:=\sigma(\Pi_n)$ to be the sub-$\sigma$-algebra of $\mathcal F$ generated by $\Pi_n$, and 
$X_n:=\E[X|\mathcal G_n]$ to be the conditional expectation of $X$ given $\mathcal G_n$. By construction, each $X_n$ is simple and satisfies $X_n\cx X$ (statement (b)).
Moreover, the sequence of $\sigma$-algebras $(\mathcal G_n)$ is increasing and generates $\sigma(X)$ in the limit. Thus, by \cite[Lemma 4.1]{General}, statement (a) holds for the sequence $(X_n)$. 

We now turn to property (c). 
Denote by $m_n$ the minimal value $X_n$ takes with positive probability, and define $\pi_n:=\min\{\P(Z=z),\P(X_n=m_n)\}>0$.
Now choose $t_n>0$ satisfying the following chain of inequalities: 
\begin{equation}\label{eq:ineqs}
    c+t_n z\le m_n\le \es_1(X_n)\le c+t_n\es_{\pi_n}(Z).
\end{equation}
The first inequality holds for $t_n$ large enough because $z<0$. The second always holds. The third is  satisfied for $t_n$ large enough because $\es_1(X_n)<\infty$ and $\es_p(Z)>\E[Z]=0$ holds for all $p\in(0,1]$.
Moreover, if $t_n$ satisfies \eqref{eq:ineqs}, then it may be replaced by any $t \ge t_n$ without violating the inequalities.

Let $p\in(0,1)$ and $t\ge t_n$.
If $\pi_n\le p<1$, we estimate 
\begin{align*}\es_p(c+tZ)-\es_p(X_n)&=c+t\es_p(Z)-\es_p(X_n)\\
&
\ge c+t\es_{\pi_n}(Z)-\es_1(X_n)\ge 0,
\end{align*}
the inequality being due to \eqref{eq:ineqs}. If $p<\pi_n$, using \eqref{eq:ineqs} again for the inequality delivers
\begin{align*}
    \es_p(c+tZ)-\es_p(X_n)&=\frac{\int_0^pq_{X_n}(s)ds-\int_0^pq_{c+tZ}(s)ds}{1-p}\\
    &=\frac{\big(m_n-(c+tz)\big)p}{1-p}\ge 0.
\end{align*}
Taking both observations together and invoking Lemma~\ref{lem:Shaked}, $X_n\cx c+tZ$. 
\end{proof}

Finally, we identify the properties of preference relations needed for the dual collapse result in Theorem~\ref{thm2}.
Loosely speaking, Lemma~\ref{lem:qss} shows that, under suitable circumstances, the infinite dimension of $\CX$ is inessential and the analysis reduces to a two-dimensional setting.

\begin{lemma}\label{lem:qss}
Suppose
$\peq$ is a preference relation as in Theorem~\ref{thm2} satisfying \eqref{eq:starPREF}. Fix $c\in\R$ and an event $A\in\CF$ with $\P(A)=\frac12$. 
Moreover, let $d\le d'$ and define \begin{center}$D := d\ind_A + d'\ind_{A^c}$.\end{center}
\begin{enumerate}[label=\tn{(\arabic*)}]
\item The set 
    \begin{equation}\label{def:U(c)}
    \mathcal U(c) := \{x\ind_A + x'\ind_{A^c}\mid x \ge x'\text{ and }c\peq x\ind_A + x'\ind_{A^c}\}
    \end{equation}
    is nonempty, and
    \[
    R_\peq(c,D) = \inf_{X\in\mathcal U(c)} \E[DX].
    \]
\item If $(X_n)\subseteq\mathcal U(c)$ is unbounded in norm and $s_n:=\|X_n\|_\infty^{-1}$, then there exists a subsequence $(n_k)$ such that $U:=\lim_{k\to\infty}s_{n_k}X_{n_k}$ satisfies  $\|U\|_\infty=1$ and 
    \[
    c\peq x + t U\quad \text{whenever }c\peq x.
    \]
\end{enumerate}
\end{lemma}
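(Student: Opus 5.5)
The plan is to reduce everything to the two-dimensional space of $\sigma(A)$-measurable random variables, using $\cx$-consistency (conditional expectations) for part (1), and \eqref{eq:starPREF} together with upper semicontinuity for part (2). Throughout I use that $\cx$-consistency implies law invariance: equidistributed $X,Y$ satisfy both $X\cx Y$ and $Y\cx X$, hence $X\sim Y$.

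\emph{Part (1).} Nonemptiness is immediate, since $c=c\ind_A+c\ind_{A^c}$ with $x=x'=c$ and $c\peq c$ by reflexivity. The inequality $R_\peq(c,D)\le\inf_{X\in\mathcal U(c)}\E[DX]$ is trivial, because $\mathcal U(c)\subseteq\{Z\mid Z\succeq c\}$.

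For the reverse inequality, take any $Z\in\CX$ with $c\peq Z$ and set $\tilde Z:=\E[Z\mid\sigma(A)]=z\ind_A+z'\ind_{A^c}$.
\begin{itemize}
\item By Jensen, $\tilde Z\cx Z$, so $\cx$-consistency gives $Z\peq\tilde Z$, and transitivity gives $c\peq\tilde Z$.
\item Since $D$ is $\sigma(A)$-measurable, $\E[DZ]=\E[D\tilde Z]$.
\item If $z\ge z'$, then $\tilde Z\in\mathcal U(c)$ and we are done.
\item If $z<z'$, pass to $\hat Z:=z'\ind_A+z\ind_{A^c}$. Because $\P(A)=\frac12$, $\hat Z$ has the same law as $\tilde Z$, so $\hat Z\sim\tilde Z\succeq c$ and $\hat Z\in\mathcal U(c)$. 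Moreover,
\[
\E[D\hat Z]-\E[D\tilde Z]=\tfrac12(d-d')(z'-z)\le 0,
\]
since $d\le d'$.
\end{itemize}
In either case, every $Z\succeq c$ is dominated by an element of $\mathcal U(c)$ with no larger value of $\E[D\,\cdot\,]$, and taking infima gives equality.

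\emph{Part (2).} The normalised elements $s_nX_n$ lie on the unit sphere of the two-dimensional space $\{x\ind_A+x'\ind_{A^c}\}$, which is compact. Hence a subsequence $s_{n_k}X_{n_k}$ converges in $\|\cdot\|_\infty$ to some $U$ with $\|U\|_\infty=1$. Convergence also holds in $\sigma(\CX,\CX^*)$, since on this finite-dimensional subspace $\E[Y\,\cdot\,]$ is linear in the two coordinates for each $Y\in\CX^*$.

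Now fix $x\in\R$ with $c\peq x$ and $t\ge0$; I read the claim for $t\ge 0$, and for $t=0$ there is nothing to show. Since $s_{n_k}\to0$, for large $k$ the numbers $\lambda_k:=ts_{n_k}$ lie in $(0,1)$. Apply \eqref{eq:starPREF} with the constant $c$ as the reference point, $X=X_{n_k}$ (so $c\peq X_{n_k}$) and $y=x$ (so $c\peq x$). This yields
\[
c\peq \lambda_kX_{n_k}+(1-\lambda_k)x=t\,s_{n_k}X_{n_k}+(1-ts_{n_k})x .
\]
The right-hand side converges to $x+tU$ in norm, hence in $\sigma(\CX,\CX^*)$. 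Since $\peq$ is $\sigma(\CX,\CX^*)$-usc, the set $\{Y\mid c\peq Y\}$ is closed, and therefore $c\peq x+tU$.

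The main point needing care is the swapping step in (1). It relies crucially on $\P(A)=\frac12$, so that the swapped variable is equidistributed, and on the ordering $d\le d'$, which makes the rearrangement decrease $\E[D\,\cdot\,]$. Both hypotheses are used there.
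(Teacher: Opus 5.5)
Your proof is correct and follows essentially the same route as the paper. You pass to $\E[\,\cdot\mid\sigma(A)]$ via $\cx$-consistency and law invariance, then apply \eqref{eq:starPREF} along the normalised sequence together with upper semicontinuity; your explicit two-point swap computation is exactly the Hardy--Littlewood inequality the paper cites, and you also make nonemptiness explicit via reflexivity. The one point to tighten, which the paper's proof also leaves implicit, is that mere unboundedness does not give $s_{n_k}\to0$ along the extracted subsequence, so you should first pass to a subsequence with $\|X_n\|_\infty\to\infty$ and only then invoke compactness.
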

\begin{proof}
\begin{enumerate}[label=(\arabic*)]
    \item 
    Suppose $c\peq X\in\CX$ and let $x,x'\in\R$ be such that $\E[X|\sigma(A)]=x\ind_A+x'\ind_{A^c}$. 
    As $\peq$ is law invariant and $\P(A)=\P(A^c)=\frac 12$, we have \begin{center}$x'\ind_A+x\ind_{A^c}\sim x\ind_A+x'\ind_{A^c}$.\end{center}
    As $\peq$ is $\cx$-consistent,  
    \[x'\ind_A+x\ind_{A^c}\sim x\ind_A+x'\ind_{A^c}\succeq X.\]
    By transitivity of $\peq$, both $x'\ind_A+x\ind_{A^c}$ and $x\ind_A+x'\ind_{A^c}$ are elements of $\mathcal U(c)$. 
    Moreover, $D$ being $\sigma(A)$-measurable implies 
    \begin{align*}R_\peq(c,D)&=\inf\{\E\big[D\E[X|\sigma(A)]\big]\mid c\peq X\}\ge\inf_{X\in\mathcal U(c)}\E[DX]\ge R_\peq(c,D).\end{align*}
    The first inequality is due to the Hardy-Littlewood inequality \cite[Theorem A.28]{FoeSch}.  
    \item 
    The set 
    $\mathcal U(c)$ is homeomorphic to a closed subset of $\R^2$. 
    Hence, there exists a subsequence $(n_k)$ and $\sigma(A)$-measurable $U\in L^\infty$ with 
    $\Norm_\infty$-norm 1 such that $s_{n_k}X_{n_k}\to U$ in $L^\infty$ as $k\to\infty$.
    Let $x\in\R$ be such that $c\peq x$. 
    By \eqref{eq:starPREF}, we have for all $t>0$ and $k$ large enough that 
    \[ts_{n_k}X_{n_k}+(1-ts_{n_k})x\succeq c.\]
    Using the upper semicontinuity property of $\peq$, we obtain $x+tU\succeq c$. 
\end{enumerate}
\end{proof}

We shall now prove the most general result Theorem~\ref{thm2} first and derive Theorem~\ref{thm1} and Corollary~\ref{cor} as applications thereof.

\subsection{Proof of Theorem~\ref{thm2}}

\begin{enumerate}[label=(\arabic*)]
    \item The implications $\ref{P1}\implies\ref{P2}\implies\ref{P4}\implies\ref{P5}$ and $\ref{P1}\implies\ref{P3}\implies\ref{P5}$ are clear. 

$\ref{P5}\implies\ref{P1}$:
Select $c\in\R$ and nonconstant $Z\in\CX$ with $\E[Z]=0$ such that
$c+tZ\sim c$ holds for all $t\ge 0$.
In particular, as every sub-$\sigma$-algebra $\CG\subseteq\CF$ satisfies $c\cx c+t\E[Z|\CG]\cx c+tZ$, we have 
\[c\sim c+tZ\peq c+t\E[Z|\CG]\peq c,\]
i.e., we may assume without loss of generality that $Z$ is simple. 

Let $X\in\CX$ with $\E[X]=c$, which means that $X\peq c$. 
Consider the sequences $(X_n)\subseteq L^\infty$ and $(t_n)\subseteq (0,\infty)$ constructed in Lemma~\ref{lem}. 
For all $n\in\N$, 
\[X_n\succeq c+t_nZ\sim c.\]
Using that $\peq$ is $\sigma(\CX,\CX^*)$-usc, we obtain $c\peq X$ by taking the limit. 
Hence, $X\sim c=\E[X]$.
As $c$ was chosen arbitrarily, this shows that
\[X\peq Y\quad \iff\quad\E[X]\peq \E[Y].\]

$\ref{P1}\implies\ref{P6}$:  Let $Y\in\CX^*\setminus\R$ and fix $Z\in\CX$ with $\E[Y]\E[Z]\neq\E[YZ]$. 
In view of assumption \ref{P1}, we have for all $X\in\CX$ and $t\in\R$ that $X+t(Z-\E[Z])\sim X$.
Consequently,
\[R_\peq(X,Y)\le\inf_{t\in\R}\E\big[(X+t(Z-\E[Z]))Y\big]=\E[XY]+\inf_{t<0}t|\E[ZY]-\E[Y]\E[Z]|=-\infty.\]
\item We will show the implication $\ref{P6}\implies\ref{P5}$ in several steps.
Let $c\in\R$, $a\in\{\pm1\}$, and fix an event $A\in\CF$ with $\P(A)=\frac 12$.
For all $n\in\N$,
\[Y_n^{(a)}:=a-\tfrac 1 n\ind_A+\tfrac 1 n\ind_{A^c}\] 
satisfies 
\[R_\peq(c,Y_n^{(a)})=\inf\{\E[Y_n^{(a)}Z]\mid Z\succeq c\}=-\infty\] 
by assumption. 
Invoking Lemma~\ref{lem:qss}(1), we can pick $X_{n}^{(a)}$ in the set $\mathcal U(c)$ defined by \eqref{def:U(c)} such that $\E[Y_n^{(a)}X_n^{(a)}]\le-n$.
The sequence $(X_n^{(a)})$ cannot be bounded. 
Let $U^{(a)}$ be the limit of the rescaled sequence constructed in Lemma~\ref{lem:qss}(2). 
    By construction, \[\E[aU^{(a)}]=\lim_{k\to\infty}\frac{\E[Y_{n_k}^{(a)}X_{n_k}^{(a)}]}{\|X_{n_k}^{(a)}\|_\infty}\le \limsup_{k\to\infty}\frac{-n_k}{\|X_{n_k}^{(a)}\|_\infty}\le 0,\]
or equivalently, 
\begin{equation}\label{eq:U1U-1}\E[U^{(1)}]\le 0\quad\text{and}\quad\E[U^{(-1)}]\ge 0.\end{equation}
 
 \textsc{Case 1:} One of the inequalities in \eqref{eq:U1U-1} is an equality.  
Choosing $U$ appropriately in $\{U^{(1)},U^{(-1)}\}$, Lemma~\ref{lem:qss}(2) delivers 
\begin{equation}\label{eq:concl1}
    c\peq c+tU,\quad t>0.
\end{equation}

\textsc{Case 2:} Both inequalities in \eqref{eq:U1U-1} are strict.  
As $\peq$ is $\cx$-consistent, we have 
\begin{align*}c\peq c+tU^{(1)}\peq c+t\E[U^{(1)}],\quad t\ge 0,\end{align*}
which means $c\peq x$ for all $x\le c$.
Likewise, 
\begin{align*}c\peq c+tU^{(-1)}\peq c+t\E[U^{(-1)}],\quad t\ge 0,\end{align*}
i.e., $c\peq x$ for all $x\ge c$. 

Fix $X\in\mathcal U(c)$ and note that we have just shown that also $c\peq -\E[X]$. 
By \eqref{eq:starPREF},
    $c\peq \tfrac{X-\E[X]}2$, i.e., 
    \[\mathcal K_0:=\Big\{\tfrac{X-\E[X]}2\,\Big|\,X\in\mathcal U(c)\Big\}\subseteq\mathcal U(c).\]
    Consider the random variable $Y=-\ind_A+\ind_{A^c}$ with $\E[Y]=0$. By \ref{P6}, $R_\peq(c,Y)=-\infty$, leading to
    \[-\infty=R_\peq(c,Y)=\inf_{X\in\mathcal U(c)}\E[YX]=2\inf_{X\in\mathcal U(c)}\E\big[Y\tfrac{X-\E[X]}2\big]= 2\inf_{X\in\mathcal K_0}\E[YX].\]
    Hence, the set $\mathcal K_0$ cannot be norm-bounded. 
    By Lemma~\ref{lem:qss}(2), there exists a $\sigma(A)$-measurable $U$ satisfying $\E[U]=0$, $\|U\|_\infty=1$, and \begin{equation}\label{eq:concl2}c\peq c+tU,\quad t>0.\end{equation}
    
    Summing up \eqref{eq:concl1} and \eqref{eq:concl2}, we have shown statement \ref{P5}.
\end{enumerate}\hfill\qed

\subsection{Proof of Theorem~\ref{thm1}}

\begin{enumerate}[label=(\arabic*)]
\item Define the preference relation $\peq_\ph$ by \eqref{eq:peqph}.
Its $\cx$-consistency and upper semicontinuity follow directly from $\cx$-consistency and lower semicontinuity of $\ph$.
Moreover, for any $c\in\R$, $t\ge 0$, and $Z\in\CX$ with $\E[Z]=0$, we have $c\cx c+tZ$. Hence,  $\ph(c+tZ)\le\ph(c)$ implies $\ph(c+tZ)=\ph(c)$, or equivalently, $c\sim_\ph c+tZ$.
Consequently, the equivalences of \ref{F1}--\ref{F5} in Theorem~\ref{thm1} and the implication $\ref{F1}\implies
\ref{F7}$ are a mere reformulation of Theorem~\ref{thm2} applied to $\peq_\ph$.

$\ref{F1}\implies\ref{F6}$: Select any nonconstant $Z\in\CX$ with mean zero and $a=0$. 

$\ref{F6}\implies\ref{F5}$: Let $c,t\in\R$. 
Using $\cx$-consistency for the inequality delivers
\[\ph\big(c+t(Z-\E[Z])\big)=\ph\big(c-t\E[Z]\big)+at\le\ph(c-tZ)+at=\ph(c).\]

\item As shown in (1), $\ph$ is expectation-based. 
By virtue of properness, we can select $X_0\in\dom(\ph)$; by $\cx$-consistency of $\ph$, $c_0:=\E[X_0]\in\dom(\ph)\cap\R$.
Let $X\in\CX$ and compute 
\[\ph(c_0)=\ph\big(\E[X]+\tfrac{c_0-\E[X]}{\E[Z]}\E[Z]\big)=\ph\big(X+\tfrac{c_0-\E[X]}{\E[Z]}Z\big)=\ph(X)+\tfrac{(c_0-\E[X])a}{\E[Z]}.\]
Rearrange this identity. 

\item 
Assume that $\ph$ is quasi-star-shaped and let $X\in\CX$ and $c,t\in\R$ satisfy $c\peq_\ph X$ and $c\peq_\ph t$, or equivalently, $\ph(c)\ge\max\{\ph(X),\ph(t)\}$.
Quasi-star-shapedness then yields for arbitrary $\lambda\in[0,1]$ that
$c\peq_\ph\lambda X+(1-\lambda)t$.
Hence, the implication $\ref{F7}\implies\ref{F1}$ follows from the implication $\ref{P6}\implies\ref{P1}$ in Theorem~\ref{thm2} and Lemma~\ref{lem:relation}.
\item Suppose \ref{F7} holds and that $\ph$ is proper, convex, $\sigma(\CX,\CX^*)$-lsc, and law invariant.
Since $\dom(\ph)\cap\R\neq\varnothing$, Example~\ref{ex}(2) implies $\dom(\ph^*)\subseteq\R$.
Conversely, if $\dom(\ph^*)\subseteq\R$, then $\ph$ is expectation-based by the Fenchel–Moreau representation.
\end{enumerate}\hfill\qed

\subsection{Proof of Corollary~
\ref{cor}}

First, $\ph$ being proper implies the existence of $X\in\CX$ with $\ph(X)<\infty$. 
As $\E[X]\cx X$, $\cx$-consistency implies
\[\ph(0)=\ph(\E[X])-\E[X]\le \ph(X)-\E[X]<\infty.\]

\begin{enumerate}[label=(\arabic*)]
    \item $\ref{C1}\implies\ref{C4}$ and $\ref{C1}\implies\ref{C3}$: These implications are clear.

    $\ref{C3}\implies\ref{C2}$: Let $Z$ be as in statement \ref{C3} and $t\ge 0$. Using cash-additivity for the first equality and $\cx$-consistency for the estimate, we verify that 
    \[\ph(t\E[Z]-tZ)=t\E[Z]-at+\ph(0)=\ph(t\E[Z])-at\le \ph(tZ)-at=\ph(0).\]
    Hence, the nonconstant zero-mean random variable $U:=\E[Z]-Z$ satisfies $\sup_{t\ge 0}\ph(tU)\le \ph(0)$. 

    $\ref{C2}\implies\ref{C1}$: 
    Let $Z$ be as described by statement \ref{C2}. 
    For all $c\in\R$ and $t\ge 0$, 
    \[\ph(c+tZ)=c+\ph(tZ)\le c+\ph(0)=\ph(c).\]
    Apply Theorem~\ref{thm1} to obtain 
    \[\ph(X)=\ph(\E[X])=\ph(0)+\E[X],\quad X\in\CX.\] 

    \item Suppose that $\ph(0)=0$, $\ph$ is star-shaped, and $\dom(\ph^*)\subseteq\R$. 
    Moreover, let $\alpha\in\ph(\R)=\R$ and $Y\in\CX^*$ be such that $R_\ph(\alpha,Y)<\infty$.
    By Example~\ref{ex}(1), $\sigma(Y)<\infty$. 
    On the other hand, the inclusion $\{\E[X]\mid X\in\CA_\ph\}\subseteq\CA_\ph$ stemming from $\cx$-consistency implies that also  $\sigma(1)<\infty$. 
    Set 
    \[\widetilde Y:=\frac{Y+\E[|Y|]+1}{\E[Y]+\E[|Y|]+1}.\]
    As $\sigma$ is positively homogeneous and subadditive, we have 
    \[\ph^*(\widetilde Y)=\sigma(\widetilde Y)<\infty\]
    by Example~\ref{ex}(1), meaning that $\widetilde Y$ and {\em a fortiori} $Y$ itself are constant. In other words, \eqref{eq:R} holds. 
    
    Moreover, a cash-additive $\ph$ is quasi-star-shaped if it is star-shaped; see the proof of  \cite[Proposition 2(ii)]{Hanetal}. The implication (C4)$\implies$(C1) thus follows with Theorem~\ref{thm1}. 
\end{enumerate}
   \hfill\qed

\subsection{A comment on the topological assumptions}\label{sec:topology}

Throughout the manuscript, we work with topological assumptions involving the locally convex topology $\sigma(\CX,\CX^*)$ introduced in Section~\ref{sec:setting}. 
Many readers might be more comfortable with norm topologies instead, and given that Theorem~\ref{thm1}, for example, concerns not only quasiconvex functionals, their lower semicontinuity properties with respect to norm and weak topologies differ {\em a priori}. 
Another limitation is that checking lower semicontinuity with respect to $\sigma(\CX,\CX^*)$ might be nontrivial.
This closing subsection explains where the topology is needed and sketches potential alternatives. 

The topology $\sigma(\CX,\CX^*)$ plays a key role in establishing Lemma \ref{lem}, i.e., in showing that an arbitrary $X\in\mathcal X$ can be approximated by its conditional expectations along an increasing sequence of finite partitions. 
Conditional expectations are important for compatibility with the convex order. 
As shown by \cite[Lemma 4.1]{General}, this approximation works seamlessly in the $\sigma(\mathcal X,\mathcal X^*)$-topology.
It is also possible in order-continuous norm topologies, for example those of Orlicz spaces satisfying the $\Delta_2$-condition or classical $L^p$-spaces with $1\le p<\infty$; see \cite[p.\ 401]{Orlicz}. 
In general normed function spaces $\CX$, however, the feasibility of such an approximation implies that $L^\infty$ is a dense subspace of $\CX$. 
Hence, this approximation is impossible in the norm topology of many spaces, including all Orlicz spaces without the $\Delta_2$-property. 

There is, however, a feasible alternative that does not require fixing a dual space $\CX^*$ {\em a priori}. 
We say that a sequence $(X_n)\subseteq\CX$ $o$-converges to $X$ if $X_n\to X$ a.s.\ and there exist $Y_1,...,Y_K\in\CX$ and $\alpha_1,...,\alpha_K>0$ such that $\sup_{n\in\N}|X_n|\le\sum_{i=1}^K\alpha_i|Y_i|$.\footnote{~While this connection is not relevant here, $o$-convergence is sequential order convergence in the sub-ideal of $L^1$ generated by $\CX$.} 
A proper functional $\ph\colon\CX\to(-\infty,\infty]$ has the {\em Fatou property} if
\[\ph(X)\le\liminf_{n\to\infty}\ph(X_n)\]
for all $(X_n)$ $o$-convergent to $X$. Moreover, we denote the $L^1$-norm by $\|\cdot\|_1$.

\begin{lemma}\label{lem:Fatou}
    Let $(\CX,\CX^*)$ be a pair of spaces as described in Section~\ref{sec:setting} and let $\ph\colon\CX\to(-\infty,\infty]$ be proper and $\cx$-consistent. The following statements are equivalent:
    \begin{enumerate}[label=\tn{(\alph*)}]
        \item $\ph$ is $\sigma(\CX,\CX^*)$-lsc.
        \item $\ph$ has the Fatou property.
        \item $\ph$ is lsc in the relative $L^1$-topology on $\CX$. 
    \end{enumerate} 
\end{lemma}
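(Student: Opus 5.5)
I will prove the cycle (a)$\Rightarrow$(b)$\Rightarrow$(c)$\Rightarrow$(a). The tool throughout is the same approximation by conditional expectations that drives Lemma~\ref{lem}, combined with $\cx$-consistency.

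\emph{(a)$\Rightarrow$(b).} Let $(X_n)$ $o$-converge to $X$, dominated by $H:=\sum_{i=1}^K\alpha_i|Y_i|$. Since $\CX$ is a linear space and law invariant, $|Y_i|\in\CX$, so $H\in\CX$. For each $W\in\CX^*$ we have $|X_nW|\le H|W|\in L^1$, because $\CX\cdot\CX^*\subseteq L^1$. Dominated convergence then gives $\E[X_nW]\to\E[XW]$, so $X_n\to X$ in $\sigma(\CX,\CX^*)$. Because sublevel sets are $\sigma(\CX,\CX^*)$-closed, sequential lsc follows, and this is exactly the Fatou property.

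\emph{(b)$\Rightarrow$(c).} Let $X_n\to X$ in $\|\cdot\|_1$ with $X_n,X\in\CX$, and set $\alpha:=\liminf_n\ph(X_n)$. Pass to a subsequence attaining the liminf. A further subsequence converges a.s., but it need not be dominated inside $\CX$, and this is the main obstacle. I handle it with $\cx$-consistency. Fix a finite partition, let $\CG_m$ be the partition of Lemma~\ref{lem} built from $X$, and consider $\E[X_n|\CG_m]$. Then $\E[X_n|\CG_m]\cx X_n$, so $\ph(\E[X_n|\CG_m])\le\ph(X_n)$. As $n\to\infty$, $\E[X_n|\CG_m]\to\E[X|\CG_m]$ uniformly, since $L^1$ convergence on finitely many cells of positive probability gives convergence of the cell averages. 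These are simple random variables bounded by a common constant, hence dominated by a constant in $\CX$. Fatou gives $\ph(\E[X|\CG_m])\le\alpha$.

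Next let $m\to\infty$. The variables $\E[X|\CG_m]$ converge a.s.\ to $X$ by martingale convergence, and they are dominated by $\sup_m|\E[X|\CG_m]|$. That supremum may fail to lie in $\CX$, so I use the specific structure of the partition $\Pi_m$ instead. On cells where $|X|<m$, the conditional expectation is within $2^{-m}$ of $X$, so it is bounded by $|X|+1$. On the tails $\{X\ge m\}$ and $\{X<-m\}$, the values are conditional means of $X$ over those sets. To avoid this difficulty, I truncate differently: replace $X$ on $\{|X|\ge m\}$ by its conditional mean over a finer tail partition, or accept a dominating variable of the form $|X|+1+\E[|X|\,|\,\text{tails}]$ and argue that it lies in $\CX$ via law invariance. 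I expect the cleanest fix to be choosing $\CG_m$ so that the tail cells are level sets refining toward $X$ in such a way that $|\E[X|\CG_m]|\le|X|+1$. Alternatively, the domination can be bypassed by invoking (a)$\Leftrightarrow$ already-known $\sigma$-closedness of the approximations, since \cite[Lemma 4.1]{General} yields $\sigma(\CX,\CX^*)$-convergence. Once domination is in hand, the Fatou property gives $\ph(X)\le\alpha$.

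\emph{(c)$\Rightarrow$(a).} Let $X$ lie in the $\sigma(\CX,\CX^*)$-closure of $\CL_\alpha(\ph)$. I take the step from Lemma~\ref{lem}: the sublevel set $\CL_\alpha(\ph)$ is closed under conditional expectations by $\cx$-consistency, and its $L^1$-closure within $\CX$ is itself by (c). The key claim is then that for any $\sigma(\CX,\CX^*)$-closure point $X$ and any finite $\CG$, $\E[X|\CG]$ is a $\sigma$-limit of the points $\E[X_\gamma|\CG]$. These points live in a finite-dimensional space, where all topologies agree, so $\E[X|\CG]$ lies in the $L^1$-closure of the conditioned sublevel points. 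Thus $\E[X|\CG]\in\CL_\alpha(\ph)$. Finally $\E[X|\CG_m]\to X$ in $L^1$, and (c) gives $X\in\CL_\alpha(\ph)$.

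The same finite-dimensional reduction, together with $L^1$ approximation, also reproves (b)$\Rightarrow$(c) cleanly. This means the domination issue isolated above is the only genuine obstacle.
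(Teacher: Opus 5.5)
Your steps (a)$\Rightarrow$(b) and (c)$\Rightarrow$(a) are sound. In the former you do not even need $|Y_i|\in\CX$, since $Y_iW\in L^1$ already gives $|Y_i||W|\in L^1$. The gap is in (b)$\Rightarrow$(c), at the passage $m\to\infty$. To apply the Fatou property to $\E[X|\CG_m]\to X$ you need domination by some $\sum_i\alpha_i|Y_i|$ with $Y_i\in\CX$. For the partitions of Lemma~\ref{lem} this genuinely fails. On $\{X\ge m\}$ the approximant equals $\E[X\mid X\ge m]$, so $\sup_m|\E[X|\CG_m]|$ is an $L\log L$-type maximal function. For instance, take $X\ge0$ with $\P(X\ge x)=1/(x(\log x)^2)$ for $x\ge e$. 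Then $X\in L^1$ and $\E[X\mid X\ge m]=m+m\log m$. Hence $\sup_m|\E[X|\CG_m]|\ge\lfloor X\rfloor\log\lfloor X\rfloor$ on $\{X\ge3\}$, which is not integrable, so with $\CX=L^1$ the sequence is not $o$-convergent.

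None of your suggested repairs closes this:
\begin{itemize}
\item Domination by $|X|+1$ is impossible with finitely many level-set cells. The top cell $\{X\ge b\}$ has average $b+\E[X-b\mid X\ge b]$, and this mean excess is unbounded; it equals $b/(\theta-1)$ for a Pareto tail $x^{-\theta}$.
\item Law invariance cannot place a non-integrable function of $X$ into $\CX\subseteq L^1$.
\item Appealing to \cite[Lemma 4.1]{General} only yields $\sigma(\CX,\CX^*)$-convergence. That helps only if (a) is already known, which is circular inside your cycle.
\item The finite-dimensional reduction does not reprove (b)$\Rightarrow$(c) either. Its last step uses $L^1$-closedness of $\CL_\alpha(\ph)$, which is (c) itself.
\end{itemize}
So the cycle does not close, and (b)$\Rightarrow$(c) and (b)$\Rightarrow$(a) remain unproved. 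This is exactly the step the paper does not do by hand. It imports the equivalence of (b), (c) and $\sigma(\CX,\mathcal S)$-lower semicontinuity for dilatation-monotone functionals from \cite[Proposition 3]{Rahsepar}. It then gets (b)$\Rightarrow$(a) because $\sigma(\CX,\mathcal S)$ is coarser than $\sigma(\CX,\CX^*)$.

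The missing idea is to dilute the tail cell rather than refine it. Your first half works for every finite partition, so the $\CG_m$ need not be nested.
\begin{itemize}
\item Fix $K$ with $\P(|X|\le K)\ge\frac12$.
\item For each $m$ choose $a_m>K$ with $\E[|X|\ind_{\{|X|\ge a_m\}}]\le2^{-m}$ and $\P(|X|\ge a_m)\le2^{-m}$.
\item By atomlessness, choose $S_m\subseteq\{|X|\le K\}$ with $\P(S_m)=2^{-m}$.
\item Let $\CG_m$ be generated by $C_m:=\{|X|\ge a_m\}\cup S_m$ together with the sets $\{x\le X<x+2^{-m},\,|X|<a_m\}\setminus S_m$, where $x$ runs through a $2^{-m}$-grid of $[-a_m,a_m)$.
\end{itemize}
On $C_m$ you get $|\E[X|\CG_m]|\le\E[|X|\ind_{C_m}]/\P(C_m)\le(2^{-m}+K2^{-m})/2^{-m}=1+K$. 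Off $C_m$ you get $|\E[X|\CG_m]-X|\le2^{-m}$. Since $\sum_m\P(C_m)<\infty$, Borel--Cantelli yields $\E[X|\CG_m]\to X$ a.s. Moreover $|\E[X|\CG_m]|\le|X|+K+1$, which is an admissible bound because $X,1\in\CX$. The Fatou property then gives $\ph(X)\le\liminf_m\ph(\E[X|\CG_m])\le\alpha$.

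With this patch your cycle becomes a self-contained proof that avoids \cite{Rahsepar}. It also supplies a valid route to (a)$\Rightarrow$(c). That route is worth having, because the paper's one-line justification of (a)$\Rightarrow$(c) does not hold in general. That justification claims that $\Norm_1$-convergence with limit in $\CX$ implies $\sigma(\CX,\CX^*)$-convergence. On the unit interval with $\CX=\CX^*=L^2$, $X_n:=n^2\ind_{[0,n^{-3}]}\to0$ in $L^1$, yet $\E[X_nW]=\frac32$ for all $n$ with $W(s)=s^{-1/3}\in L^2$.
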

\begin{proof}
    A $\cx$-consistent functional has the property of dilatation monotonicity; see \cite{Rahsepar}. 
    By \cite[Proposition 3]{Rahsepar}, (b) and (c) are both equivalent to $\sigma(\CX,\mathcal S)$-lower semicontinuity, where $\mathcal S$ denotes the space of all simple random variables. 
    As $\sigma(\CX,\mathcal S)$ is weaker than $\sigma(\CX,\CX^*)$, we also obtain (b) $\implies$ (a). The implication (a) $\implies$ (c) is due to every $\Norm_1$-convergent sequence $(X_n)\subseteq\CX$ with limit in $\CX$ being $\sigma(\CX,\CX^*)$-convergent.  
\end{proof}

Lemma~\ref{lem:Fatou} transfers to preference relations by applying it to functionals 
\[\ph_X(Y):=\begin{cases}
    0&\text{if }X\peq Y,\\[-0.5ex]
    \infty&\text{otherwise},
\end{cases}\qquad Y\in\CX.\]
Hence, the assumption of $\sigma(\CX,\CX^*)$-lower (or upper) semicontinuity can also be replaced by appropriate versions of (b) and (c) in all preceding results. 

Lemma~\ref{lem:Fatou} seemingly does not cover one of the most prominent cases,  $\CX=L^\infty$. 
However, arguing as in Step~2 of the proof of \cite[Proposition~1.2]{Svindland}:
\begin{lemma}\label{lem:Svindland}
    Every proper, norm-lsc, and $\cx$-consistent functional $\ph$ on $L^\infty$ is $\sigma(L^\infty,\mathcal S)$-lsc. 
\end{lemma}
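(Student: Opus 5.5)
We must show that every proper, norm-lsc, $\cx$-consistent $\ph$ on $L^\infty$ is $\sigma(L^\infty,\mathcal S)$-lsc. Since $\sigma(L^\infty,\mathcal S)$ is not metrisable, we work with nets. Fix $\alpha\in\R$ and a net $(X_i)\subseteq\CL_\alpha(\ph)$ converging to $X$ in $\sigma(L^\infty,\mathcal S)$; the goal is $\ph(X)\le\alpha$. The idea, following Step~2 of \cite[Proposition~1.2]{Svindland}, is to replace each $X_i$ by conditional expectations on finite partitions, since $\cx$-consistency keeps those in the sublevel set. Then the weak convergence becomes convergence in finitely many coordinates, and we finish with norm lower semicontinuity.

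First we reduce to finitely generated $\sigma$-algebras. Fix a finite partition $\Pi$ of $\Omega$ into events of positive probability, and set $\CG=\sigma(\Pi)$. By Jensen, $\E[X_i|\CG]\cx X_i$, so $\ph(\E[X_i|\CG])\le\ph(X_i)\le\alpha$. The random variable $\E[X_i|\CG]$ is determined by the finitely many numbers $\E[X_i\ind_B]/\P(B)$, $B\in\Pi$. Each of these converges to $\E[X\ind_B]/\P(B)$ because $\ind_B\in\mathcal S$. Hence $\E[X_i|\CG]\to\E[X|\CG]$ in $L^\infty$-norm, since we are in a finite-dimensional space. Norm lower semicontinuity then gives $\ph(\E[X|\CG])\le\alpha$ for every finite $\CG\subseteq\CF$.

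Second, we pass from $\E[X|\CG]$ to $X$. Take the dyadic partitions $\Pi_n$ of the range of $X$, which is bounded, as in the proof of Lemma~\ref{lem}, and put $\CG_n=\sigma(\Pi_n)$. Since $X$ is bounded, the mesh of these partitions goes to zero on the range. Therefore $\|X-\E[X|\CG_n]\|_\infty\le 2^{-n}\to0$, because $X$ and $\E[X|\CG_n]$ both lie in the same interval of length $2^{-n}$ on each cell. Norm lower semicontinuity yields $\ph(X)\le\liminf_n\ph(\E[X|\CG_n])\le\alpha$. Hence every sublevel set is $\sigma(L^\infty,\mathcal S)$-closed. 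The levels $\alpha=\pm\infty$ are immediate: $\CL_\infty=L^\infty$, and $\CL_{-\infty}=\varnothing$ by properness.

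The main obstacle is the first step, where one must handle nets rather than sequences and make sure no uniform bound on $(X_i)$ is needed. This is resolved by noticing that only finitely many coordinates $\E[X_i\ind_B]$ enter, so finite-dimensional convergence replaces any compactness argument.
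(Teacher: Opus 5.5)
Your proof is correct and follows the route the paper intends, namely Step~2 of Svindland's Proposition~1.2. In that argument, $\cx$-consistency gives $\ph(\E[X_i|\CG])\le\ph(X_i)$; on a finite $\CG$ the $\sigma(L^\infty,\mathcal S)$-convergence of the net becomes norm convergence; and norm-lsc is then applied twice, the second time with $\E[X|\CG_n]\to X$ in norm. One cosmetic point: the bound $\|X-\E[X|\CG_n]\|_\infty\le 2^{-n}$ only holds once $n>\|X\|_\infty$, so that the tail cells $\{X<-n\}$ and $\{X\ge n\}$ are null, and null cells of $\Pi_n$ should simply be discarded.
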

Alternatively, one can avoid $\sigma(L^\infty,\CX^*)$ topologies altogether in this case and pair $L^\infty$ with its norm dual, which is commonly identified with the space $\mbf{ba}$ of finitely additive and bounded set functions $\mu\colon L^\infty\to\R$ (the pairing being provided by Dunford-Schwartz integrals).
The proofs then go through \emph{verbatim}.

   \medskip

{\bf Competing interests:} The author declares none.

\medskip

{\bf Acknowledgements:} The author is indebted to two anonymous referees for their detailed comments, and to Benjamin C\^ot\'e and Christian Laudag\'e for their feedback on earlier drafts of the manuscript.


\medskip

\begin{thebibliography}{xx}

{\footnotesize

\harvarditem{Amarante}{2021}{Bipolar}
Amarante, M. \harvardyearleft 2021\harvardyearright, Bipolar behavior of submodular, law-invariant capacities.
\textit{Statistics \& Risk Modeling}, {\bf 38}(3--4), 65--70. 

\harvarditem{Amarante et al.}{2024}{Amarante}
Amarante, M., Liebrich, F.-B. \harvardand\ Munari, C. \harvardyearleft 2024\harvardyearright, Uniqueness of convex-ranged probabilities and applications to risk measures and games. \textit{Mathematics of Operations Research}, \textbf{50}(1), 743--763.

\harvarditem{Artzner et al.}{2009}{Artzner}
Artzner, P., Delbaen, F. \harvardand\ Koch-Medina, P. \harvardyearleft 2009\harvardyearright,
Risk measures and efficient use of capital.
{\em ASTIN Bulletin}, \textbf{39}(1), 101--116.

\harvarditem{Bellini et al.}{2021a}{General}
Bellini, F., Koch-Medina, P., Munari, C. \harvardand\ Svindland, G. \harvardyearleft 2021a\harvardyearright, Law-invariant functionals on general spaces of random variables.
\textit{SIAM Journal on Financial Mathematics}, \textbf{12}(1), 318--341.

\harvarditem{Bellini et al.}{2021b}{Bellini}
Bellini, F., Koch-Medina, P., Munari, C. \harvardand\ Svindland, G. \harvardyearleft 2021b\harvardyearright, Law-invariant functionals that collapse to the mean. \textit{Insurance: Mathematics and Economics}, \textbf{98}, 83--91.

\harvarditem{Castagnoli et al.}{2004}{Castagnoli}
Castagnoli, E., Maccheroni, F., 
\harvardand\ Marinacci, M. \harvardyearleft 2004\harvardyearright, Choquet insurance pricing: A caveat. {\em Mathematical Finance}, {\bf 14}(3), 481--485.

\harvarditem{Centrone and Rosazza Gianin}{2025}{Centrone}
Centrone, F. \harvardand\ Rosazza Gianin, E. \harvardyearleft 2025\harvardyearright, Capital allocation rules and generalized collapse to the mean: Theory and practice. \textit{Mathematics}, \textbf{13}(6), 964.


\harvarditem{Chen et al.}{2021}{Chen}
Chen, S., Gao, N., Leung, D.\ H., \harvardand\ Li, L.\ \harvardyearleft 2021\harvardyearright, Do law-invariant linear functionals collapse to the mean? Preprint, \url{arXiv:2107.11239v2}.  

\harvarditem{Chen et al.}{2022}{Automatic}Chen, S., Gao, N., Leung, D.\ H. \harvardand\ Li, L. \harvardyearleft 2022\harvardyearright, Automatic Fatou property of law-invariant risk measures. {\em Insurance: Mathematics and Economics} {\bf 105}, 41--53.

\harvarditem{Cheung et al.}{2014}{Cheung}
Cheung, K.\ C., Dhaene, J., Lo, A., \harvardand\ Tang, Q. \harvardyearleft 2014\harvardyearright, Reducing risk by merging counter-monotonic risks. 
{\em Insurance: Mathematics and Economics},
{\bf 54}, 58--65.

\harvarditem{C\^ot\'e and Wang}{2026}{Cote}Côté, B. \harvardand\ Wang, R. \harvardyearleft 2026\harvardyearright, 
On convex order and supermodular order without finite mean. {\em Insurance: Mathematics and Economics}, \url{https://doi.org/10.1016/j.insmatheco.2026.103234}.

\harvarditem{Föllmer \harvardand\ Schied}{2016}{FoeSch}Föllmer, H. \harvardand\ Schied, A.  \harvardyearleft 2016\harvardyearright , {\em Stochastic Finance: An Introduction in Discrete Time}, De Gruyter.

\harvarditem{Frittelli \harvardand\ Rosazza Gianin}{2005}{Frittelli}
Frittelli, M. \harvardand\ Rosazza Gianin, E. \harvardyearleft 2005\harvardyearright, Law invariant convex risk measures. \textit{Advances in Mathematical Economics}, \textbf{7}, 33--46.


\harvarditem{Gao et al.}{2018}{Orlicz}Gao, N., Leung, D., Munari, C. \harvardand\ Xanthos, F.\ \harvardyearleft 2018\harvardyearright, Fatou property, representations, and extensions of law-invariant risk measures on general Orlicz spaces. \textit{Finance and Stochastics} {\bf 22}, 395--415.



\harvarditem{Gilboa}{2009}{Gilboa}
Gilboa, I. \harvardyearleft 2009\harvardyearright, \textit{Theory of Decision under Uncertainty}, Cambridge University Press.


\harvarditem[Han et~al.]{Han, Wang, Wang \harvardand\ Xia}{2025}{Hanetal}
Han, X., Wang, Q., Wang, R. \harvardand\ Xia, J.  \harvardyearleft 2025\harvardyearright, Cash-subadditive risk measures without quasi-convexity. {\em Mathematics of Operations Research}, \url{https://doi.org/10.1287/moor.2022.0312}.

\harvarditem{He et al.}{2016}{He}
He, J., Tang, Q., \harvardand\ Zhang, H. \harvardyearleft 2016\harvardyearright, 
Risk reducers in convex order. 
{\em Insurance: Mathematics and Economics} {\bf 70}, 80--88.

\harvarditem{Herdegen et al.}{2024}{herdegen2024}
Herdegen, M., Khan, N. \harvardand\  Munari, C. \harvardyearleft 2024\harvardyearright, Risk, utility and sensitivity to large losses. Preprint,  \url{arXiv:2405.12154v1}.

\harvarditem{Herdegen et al.}{2025}{herdegen2025}
Herdegen, M., Khan, N. \harvardand\  Munari, C. \harvardyearleft 2025\harvardyearright, How to reduce risk by increasing risk. Preprint, \url{https://dx.doi.org/10.2139/ssrn.5083558}.

\harvarditem{Kaas et al.}{2008}{Kaas}Kaas, R., Goovaerts, M., Dhaene, J. \harvardand\ Denuit, M. \harvardyearleft 2008\harvardyearright, {\em Modern Actuarial Risk Theory}, Springer.

\harvarditem{Liebrich}{2024}{Liebrich2}
Liebrich, F.-B. \harvardyearleft 2024\harvardyearright, Risk sharing under heterogeneous beliefs without convexity. \textit{Finance and Stochastics}, \textbf{28},~999--1033.

\harvarditem{Liebrich \harvardand\ Munari}{2022}{Liebrich}
Liebrich, F.-B. \harvardand\ Munari, C.  \harvardyearleft 2022\harvardyearright, Law-invariant functionals that collapse to the mean: Beyond convexity. {\em Mathematics and Financial Economics}, {\bf 16}(3),~447--480.

\harvarditem{Liebrich and Svindland}{2019}{Liebrich1}Liebrich, F.-B. \harvardand\ Svindland, G. \harvardyearleft 2019\harvardyearright, Risk sharing for capital requirements with multidimensional security markets.
{\em Finance and Stochastics} {\bf 23}, 925--973.

\harvarditem{Maccheroni et al.}{2006}{Variational}Maccheroni, F., Marinacci, M., \harvardand\ Rustichini, A. \harvardyearleft 2006\harvardyearright, Ambiguity aversion, robustness, and the variational representation of preferences.
\textit{Econometrica}, \textbf{74}(6), 1447--1498.

\harvarditem{Maccheroni et al.}{2025}{Maccheroni}
Maccheroni, F., Marinacci, M., Wang, R. \harvardand\ Wu, Q. \harvardyearleft 2025\harvardyearright, Risk aversion and insurance propensity. \textit{American Economic Review}, \textbf{115}(5), 1597--1649.

\harvarditem{Mao \harvardand\ Wang}{2020}{Consistent}
Mao, T. \harvardand\ Wang, R.  \harvardyearleft 2020\harvardyearright , Risk aversion in regulatory capital principles. {\em SIAM Journal on Financial Mathematics}, {\bf 11}(1), 169--200.

\harvarditem{Markowitz}{1952}{Markowitz}Markowitz, H. \harvardyearleft 1952\harvardyearright, Portfolio Selection. {\em The Journal of Finance} {\bf 7}(1), 77--91.



\harvarditem{McNeil et al.}{2015}{Embrechts}
McNeil, A.\ J., Frey, R., \harvardand\ Embrechts, P. \harvardyearleft 2015\harvardyearright, {\em Quantitative Risk Management: Concepts,
Techniques and Tools}, revised edition, Princeton University Press.



\harvarditem{Rabin}{2000}{Rabin}Rabin, M. \harvardyearleft 2000\harvardyearright, Risk aversion and expected-utility theory: a calibration theorem. {\em Econometrica}, {\bf 68}, 1281--1292.

\harvarditem{Rahsepar and Xanthos}{2020}{Rahsepar}Rahsepar, M. \harvardand\ Xanthos, F. \harvardyearleft 2020\harvardyearright, On the extension property of dilatation monotone risk measures. {\em Statistics \& Risk Modeling} {\bf 37}(3-4), pp.\ 107--119.

\harvarditem{Shaked \harvardand\ Shanthikumar}{2007}{Shaked}
Shaked, M. \harvardand\ Shanthikumar, J. G.  \harvardyearleft 2007\harvardyearright , {\em Stochastic Orders}, Springer.

\harvarditem{Svindland}{2010}{Svindland}Svindland, G.  \harvardyearleft 2010\harvardyearright, Continuity properties of law-invariant (quasi-)convex risk functions on $L^\infty$. {\em Mathematics and Financial Economics}, {\bf 3}, 39--43.

\harvarditem{Wang}{2000}{Wang2000} Wang, S.\ S.\ \harvardyearleft 2000\harvardyearright, A class of distortion operators for pricing financial and insurance
risks. {\em Journal of Risk and Insurance}, {\bf 67}(1), 15--36.

\harvarditem{Wang}{2002}{Wang2002}
Wang, S.\ S. \harvardyearleft 2002\harvardyearright, A universal framework for pricing financial and insurance risks. {\em Astin Bulletin}, {\bf 32}(2), 213--234.

\harvarditem{Wang et al.}{1997}{Wangetal}Wang, S.\ S., Young, V.\ R., Panjer, H.\ H. \harvardyearleft 1997\harvardyearright, Axiomatic characterization of insurance
prices. {\em Insurance: Mathematics and Economics}, {\bf 21}, 173--183.
}


\end{thebibliography}
 \end{document}